\newtheorem{thm}{Theorem}
\newtheorem{pr}{Proposition}
\newtheorem{lem}{Lemma}
\newtheorem{rk}{Remark}
\newcommand{\tr}[1]{\text{Tr}\left(#1\right)}
\newcommand{\trr}[1]{\text{Tr}^{2}\left(#1\right)}
\newcommand{\EE}[1]{{\text{\normalsize$\mathbb E$}}\left(#1\right)}
\newcommand{\half}{\text{\scriptsize $\frac{1}{2}$}}
\newcommand{\CC}{{\mathbb C}}
\newcommand{\bra}[1]{\left<#1\right|}
\newcommand{\ket}[1]{\left|#1\right>}
\newcommand{\bket}[1]{\left<#1\right>}
\newcommand{\GG}{\left| G\right>}
\newcommand{\Pgxi}{P_{\!\!g_\xi}}
\newcommand{\Pexi}{P_{\!\!e_\xi}}
\newcommand{\Spxi}{\text{\Large$\sigma$}_{\!\!+_\xi}}
\newcommand{\Smxi}{\text{\Large$\sigma$}_{\!\text{-}_\xi}}
\newcommand{\pb}{\bar{p}}
\newcommand{\pbp}{\bar{p}^{+}}
\newcommand{\pbm}{\bar{p}^{{-}}}
\newcommand{\Gxi}{\left| G_{\xi}\right>}
\newcommand{\Lc}{\mathcal{L}}
\newcommand{\Mc}{\boldsymbol{\mathcal{M}}^{n}}
\newcommand{\Ac}{\mathcal{A}}
\newcommand{\Htot}{H_{\text{\tiny tot}}}
\newcommand{\Id}{\mathbb{I}}
\newcommand{\etab}{\overline{\eta}}
\newcommand{\thetab}{\overline{\theta}}
\newcommand{\etaI}{\eta^{\infty}}
\newcommand{\etaT}{\widetilde{\eta}}
\newcommand{\MK}[1]{\tfrac{#1}{\tr{#1}}}
\title{Stability of  continuous-time quantum filters with measurement imperfections}
\author{Hadis~Amini
\thanks{Edward L. Ginzton Laboratory, Stanford University, Stanford, CA 94305, USA, {\tt nhamini@stanford.edu}.}
\and  Cl\'{e}ment~Pellegrini
\thanks{Institut de Math\'{e}matiques, IMT, Universit\'{e} de Toulouse (UMR 5219), 31062 Toulouse, Cedex 9, France, {\tt clement.pellegrini@math.univ-toulouse.fr}.}
\and Pierre~Rouchon
\thanks{Centre Automatique et Syst\`{e}mes, Mines ParisTech, 60 Bd Saint Michel, 75272 Paris cedex 06, France, {\tt\small pierre.rouchon@mines-paristech.fr}.}}
\date{December 2, 2013}
\begin{document}
\maketitle
\begin{abstract}
The fidelity between the state of a  continuously  observed quantum system and  the state of  its   associated quantum filter,  is shown to be always  a submartingale. The observed system is  assumed to be governed by a   continuous-time Stochastic Master Equation (SME),  driven simultaneously by Wiener and Poisson processes and  that takes into account incompleteness and errors in measurements. This stability result is the continuous-time counterpart of  a similar stability result already established for discrete-time quantum systems and where the measurement imperfections are modeled by a left stochastic matrix.
\end{abstract}
\textbf {Keywords.}
 Quantum filtering,  stability,  continuous-time stochastic master equations, Wiener process, Poisson process,  quantum trajectories, measurement errors.
\section{Introduction}

Since the work of Davies~\cite{davies1976quantum}, the time evolution of the   state  (density operator) $\rho_t$ at time $t$ of an observed quantum system  can be described  by  a Stochastic Master Equation (SME)  taking into account the back-action of the measurements on $\rho_t$. Such  SMEs (see \cite{Barchiellei1} for a modern  exposure)  have been the starting point of the seminal contributions  of Belavkin to   quantum filtering and control~\cite{Belavkinone,belavkin1992quantum,Barchielli3,Belavkin2007}.  Quantum filters are used to get  an  estimate  $\rho^e_t$ of  $\rho_t$ based on an initial guess  $\rho^e_0$ of $\rho_0$ and on  the measurement outcomes  between $0$ and $t$. Quantum filtering  is    related to quantum trajectories~\cite{dalibard-et-al:PRL92,carmichael-book} and their original  motivations   for Monte-Carlo simulations. Roughly speaking,  quantum filtering replaces the  random numbers  used at each time-step of a  Monte-Carlo simulation by the measurement outcomes to update  the estimate  $\rho_t^e$.   An important practical issue deals with  the convergence:  does  $\rho^e_t$ converge towards $\rho_t$  when  $t$ tends to $+\infty$,  even if  $\rho^e_0\neq \rho_0$ ?  Few convergence results are available up to now, except the sufficient conditions established in~\cite{Ramonthesis,Ramon2009}  for diffusive SMEs. Particular results for quantum non demolition indirect measurements have been established in~\cite{Bauer2012,Benoist2013} As far as we know, general and checkable  necessary and sufficient convergence conditions do not exist yet.

In this paper we do not investigate directly convergence issues. We  focus on  stability issues to ensure the absence of time divergence. We prove in Theorem~\ref{thm:pws} that for a large class of  continuous-time SMEs driven simultaneously by Wiener and Poisson processes,  associated quantum filters are stable: the fidelity between $\rho$ and its estimate $\rho^e$,
\begin{equation}~\label{eq:fidelity}
F(\rho,\rho^e)=\trr{\sqrt{\sqrt{\rho}\rho^e\sqrt{\rho}}},
\end{equation}
is a sub-martingale.
In the Wiener case, Theorem~\ref{thm:pws}   shows that, whatever the Hamiltonian $H$,  the measurement operators $V_\nu$  and the detection efficiencies $\etab_\nu\in|0,1]$ are,
 the fidelity $F(\rho_t,\rho^e_t)$  is a sub-martingale where
 $\rho_t$ obeys the following  diffusive SME
\begin{multline*}
d\rho_t =-i[H,\rho_t] \,dt+ \left(\sum_\nu V_\nu \rho_t V_\nu^\dag - \half(V_\nu^\dag V_\nu\rho_t +\rho_t V_\nu^\dag V_\nu ) \right)\,dt
\\+\sum_{\nu} \sqrt{\etab_\nu}\left( V_\nu\rho_t + \rho_t V_\nu^\dag - \tr{(V_\nu+V_\nu^\dag)\rho_t} \rho_t \right) \,dW_\nu(t)
\end{multline*}  driven by the Wiener processes $W_\nu$, and where  the estimate $\rho^e_t$  obeys the following non-linear stochastic equation
\begin{multline*}
d\rho^e_t =-i[H,\rho^e_t] \,dt+\left(\sum_\nu V_\nu\rho_t^e V_\nu^\dag - \half(V_\nu^\dag V_\nu\rho_t^e +\rho_t^e V_\nu^\dag V_\nu ) \right)\,dt
\\
+\sum_{\nu} \sqrt{\etab_\nu}
 \left( V_\nu\rho_t^e + \rho_t^e V_\nu^\dag - \tr{(V_\nu+V_\nu^\dag)\rho_t^e} \rho_t^e\right)
\left(dy_\nu(t)-   \sqrt{\etab_\nu}\tr{(V_\nu+V_\nu^\dag)\rho^e_t}\,dt\right).
\end{multline*}
driven by the  measures  $dy_\nu(t)=dW_\nu(t)+\sqrt{\etab_\nu} \tr{(V_\nu+V_\nu^\dag)\rho_t}\,dt$ and initialized to any  density matrix $\rho^e_0$.
In the Poisson case, Theorem~\ref{thm:pws}  ensures, as for the Wiener case,   the stability  of the  quantum filtering process. Additionally, it  provides a new kind of SMEs taking into account incompleteness  and errors in the  jump detections:
\begin{multline*}
d\rho_t =-i[H,\rho_t] \,dt+ \left(\sum_\mu V_\mu\rho_t V_\mu^\dag - \half(V_\mu^\dag V_\mu\rho_t +\rho_t V_\mu^\dag V_\mu) \right)\,dt
\\
+\sum_{\mu}
   \left(\frac{\thetab_\mu\rho_t+\sum_{\nu}\etab_{\mu,\nu}V_{\nu}\rho_t V_\nu^\dag}{\thetab_\mu+\sum_{\nu}\etab_{\mu,\nu}\tr{V_\nu\rho_t V_\nu^\dag}} -\rho_t\right)
      \left(dN_\mu(t)-\Big( \thetab_\mu+\sum_{\nu}\etab_{\mu,\nu}\tr{V_\nu\rho_t V_\nu^\dag}\Big)\,dt\right)
\end{multline*}
driven by the  the Poisson processes $N_\mu(t)$ with
$
\langle dN_\mu(t)\rangle= \Big( \thetab_\mu+\sum_{\nu}\etab_{\mu,\nu}\tr{V_\nu\rho_t V_\nu^\dag}\Big)\,dt,
$
where  the detection  imperfections are  modeled through the parameters $\thetab_\mu\geq 0$ and $\etab_{\mu,\nu} \geq 0$ with  $\sum_{\mu} \etab_{\mu,\nu}  \leq 1$.

The proof of Theorem~\ref{thm:pws} is based on discrete-time approximations of  continuous-time SMEs. Such approximations have already been investigated  in \cite{attal2006repeated,gough2004stochastic,pellegrini2007existence,pellegrini2008existence}. They rely on indirect measurements, originally introduced in~\cite{Vladimir} and  well explained with suggestive physical systems   in~\cite{haroche-raimond:book06,wiseman-milburn:book}.
For   discrete-time SMEs, it is  proved in~\cite{rouchon2010fidelity, somaraju-et-al:acc2012} that the fidelity between the quantum state and its estimate is always a submartingale.  Theorem~\ref{thm:pws} is  obtained by passing to the limit from   discrete to   continuous time. The fidelity between the quantum state and its estimate  remains a submartingale. The obtained continuous-time SMEs are slightly more general than the ones usually encountered in the literature. Such SMEs could be of some interest to derive  quantum filters  taking into account a larger class of  incompleteness and errors in measurements and jump detections.

This paper is structured  in two  main sections. Section~\ref{sec:perfect} is devoted to Theorem~\ref{thm:continuous-time stability}, a restrictive version of Theorem~\ref{thm:pws} to  the diffusive SMEs with perfect measurements. For this simplified but representative case, the discrete-time approximation is presented and the passage to the continuous-time limit  is detailed during the proof of Proposition~\ref{pr:conv}.  Section~\ref{sec:imperfect} is devoted to Theorem~\ref{thm:pws} and fully exploits  the tools and methods  developed in Section~\ref{sec:perfect}.  We consider  SMEs driven simultaneously by Poisson and Wiener processes. We recall first the structure, described in~\cite{somaraju-et-al:acc2012} and based on a left stochastic matrix, of discrete-time SMEs associated to   imperfect measurements. We apply on the discrete-time approximations  such left stochastic matrix modeling of imperfections and errors. Then we take, thanks to Theorem~\ref{thm:pw},  the limit to get  the continuous-time SMEs and its associated quantum filters  with imperfections. Their structures are  more general than the usual ones encountered in the literature. This leads to Theorem~\ref{thm:pws} ensuring the stability of the obtained quantum filters.  Section~\ref{sec:conclusion} is a short conclusion proposing some connection with Petz characterization of monotone metrics on matrix spaces.

Some intermediate and partial results related to Theorems~\ref{thm:pws}  can be founded in~\cite{aminicdc,AminiPhD}.
\section{Perfect measurements}~\label{sec:perfect}
Let us start this section by presenting the jump-diffusive SMEs  describing the evolutions of quantum  state $\rho_t$ and its estimate $\rho^e_t$.
\subsection{Continuous-time filters}~\label{sec:filterp}
We consider quantum systems of finite dimensions $1<N<\infty$. The state space of such a system is given by the set of density matrices
\begin{equation*}
\mathcal D:=\{\rho\in \mathbb C^{N\times N}|\quad \rho=\rho^\dag,\quad \tr{\rho}=1,\quad \rho\geq 0\}.
\end{equation*}
Formally a real quantum trajectory $\rho\in\mathcal D$ in Schr\"{o}dinger picture can be described by the following SME (cf.~\cite{belavkin1992quantum,Belavkin2007,Barchiellei1})
\begin{multline}\label{eq:rho}
    d\rho_t = \left(-i[H,\rho_t] + \sum_\xi \Lc_\xi(\rho_t) \right) dt +
    \sum_\nu \Lambda_\nu(\rho_t) dW_\nu(t) + \sum_\mu \Upsilon_\mu(\rho_t) \left(dN_\mu(t) -\tr{V_\mu\rho_t V_\mu^\dag}dt\right),
\end{multline}
where
\begin{itemize}
\item the notation $[A,B]$ refers to $AB-BA;$
\item $H=H^\dag$ is a Hermitian operator corresponding to the total Hamiltonian of the system;
\item $dN_\mu$ are the Poisson processes with $\mu\in\{1,\cdots,m_P\}$ and $dW_\nu$ are the Wiener processes with $\nu\in\{m_P+1,\cdots,m_P+m_W\}$;
\item The Lindblad superoperator $\Lc_\xi$ ($\xi\in\{1,\ldots,m_W+m_P\}$) is defined by
\begin{equation}~\label{eq:lindblad}
\Lc_\xi(\rho):= V_\xi\rho V_\xi^\dag - \half(V_\xi^\dag V_\xi\rho +\rho V_\xi^\dag V_\xi ),
\end{equation}
where $V_\xi$ is an arbitrary matrix which determines the measurement process (typically the coupling to the probe field for quantum optic systems);
\item The superoperators  $\Lambda_\nu$ and $\Upsilon_\mu$ are defined respectively by $$\Lambda_\nu(\rho):= V_\nu\rho + \rho V_\nu^\dag - \tr{(V_\nu+V_\nu^\dag)\rho} \rho \quad \textrm{and}\quad \Upsilon_\mu(\rho):=\frac{V_\mu \rho V_\mu^\dag}{\tr{V_\mu \rho V_\mu^\dag}} -\rho;$$
\item The measurement outcomes are $dN_\mu$ and $dy_\nu$ where
\begin{equation}~\label{eq:inn}
 N_\mu(t)-\int_0^t\tr{V_\mu\rho_s V_\mu^\dag}\,ds\quad\textrm{is a martingale and}\quad dy_\nu(t) = dW_\nu(t)+\tr{V_\nu\rho_t V_\nu^\dag}dt.
\end{equation}
\end{itemize}
All the developments remain valid when H and $V_\xi$ are deterministic time-varying matrices. For clarity sake, we do not recall below such possible time dependence.

\medskip

In this paper, the notation $\rho^e$ corresponds to  the estimate filter associated to the  filter $\rho.$ This estimate filter is  provided from the measurement outcomes $dy_\nu$ and $dN_\mu$ and depends on the real quantum trajectory $\rho$ via the measurement outcomes~\eqref{eq:inn}. It has  the following expression derived from~\eqref{eq:rho} where
$dW_\nu(t)$ is replaced by $dy_\nu(t) -\tr{V_\nu\rho^e_t V_\nu^\dag}dt$ (see e.g.,~\cite{Belavkin2007,Barchiellei1}):
\begin{multline}\label{eq:rhoe}
    d\rho^e_t = \left(-i[H,\rho^e_t] + \sum_\xi \Lc_\xi(\rho^e_t) \right) dt +
    \sum_\nu \Lambda_\nu(\rho^e_t)\left(dy_\nu(t) -\tr{V_\nu\rho^e_t V_\nu^\dag}dt\right) \\
    + \sum_\mu \Upsilon_\mu(\rho^e_t) \left(dN_\mu(t) -\tr{V_\mu\rho^e_t V_\mu^\dag}dt\right)
    .
\end{multline}
When $\rho^e_0\neq \rho_0$, $\rho^e_t$ and $\rho_t$ do not coincide in general. However, we will see that the fidelity between $\rho^e_t$ and $\rho_t$ is a sub-martingale.

\subsection{Discrete-time filters}
First let us briefly remind the model of quantum repeated indirect measurement approach. The physics underlying such approach is well explained in~\cite{haroche-raimond:book06,wiseman-milburn:book}. In~\cite{attal2006repeated,gough2004stochastic,pellegrini2010markov}, it was rigorously shown that  such discrete-time approximations associated to the real state converges to the continuous model described in Equation~\eqref{eq:rho}.

\subsubsection{Quantum repeated measurement approach}  We consider the setup of quantum repeated interaction of $\mathcal{H}$ (describing the Hilbert space of the system state) with an infinite chain $\bigotimes_k\mathcal{K}_k$ (describing the Hilbert space of the environment) with $\mathcal{K}_k=\mathcal{K}$ for all $k$. More precisely, the first copy $\mathcal{K}_1=\mathcal{K}$ interacts with $\mathcal{H}$ during a time $\delta$ and then disappear. Next, the second copy $\mathcal{K}_2$ comes to interact with $\mathcal{H}$ and so on. This setting is coupled with indirect measurement, that is, after each interaction between $\mathcal{H}$ and $\mathcal{K},$ a measurement of an observable of $\mathcal{K}$ is performed.

\medskip

Here, we assume that $\mathcal H=\CC^N$ and the environment (meter system) is composed of $m_P+m_W$ qubits, so $\mathcal K=(\CC^2)^{\otimes m_P+m_W}.$ Also, take $\ket\psi$ as the initial state of the system and $\GG$ as the initial state of the qubits' meter which is defined as the ground state of the meter system: all qubits in the ground state $\ket g$. As a result, the initial state of the system coupled to its environment is described by $\ket\psi\otimes \GG.$

\medskip

For $\delta=1/n$ with $n$ large, assume that the Schr\"{o}dinger  evolution between time $0$ to time $1/n$ is given by
\begin{equation}~\label{eq:htot}
\Htot = H \otimes I + \sqrt{n} \sum_{\xi} \left( i V_\xi \otimes \Spxi - i V_\xi^\dag \otimes \Smxi\right),
\end{equation}
where
\begin{itemize}
  \item $H$ is the Hamiltonian of the system used   in~\eqref{eq:rho} and the operators $V_\xi$ are those  appearing in the Lindblad superoperators~\eqref{eq:lindblad}
\item $\Spxi=(\ket{e}\bra{g})_{\xi}$ and $\Smxi=\Spxi^\dag=(\ket{g}\bra{e})_\xi$
 where the notation $(A)_\xi$ is an operator $\mathcal K$ defined by $\displaystyle(A)_\xi=\bigotimes_{1\leq i<\xi}I\otimes A\otimes\bigotimes _{\xi<i\leq m_W+m_P}I$. Note that $\Smxi \GG =0.$
\end{itemize}
In the sequel, symbol $\otimes$ will be remove for compact formulae.


Take the observables $X_\nu=(\ket{g}\bra{e})_\nu+(\ket{e}\bra{g})_\nu$ for $\nu\in\{m_P+1,\cdots,m_W+m_P\}$ and $Z_\mu=(\ket{e}\bra{e})_\mu-(\ket{g}\bra{g})_\mu$ for $\mu\in\{1,\cdots,m_P\}$.  For $n$ large,  the measurement of all qubits at final time $t=1/n$ according to the observables  $X_\nu$ and $Z_\mu,$  yields an approximation   for the Wiener  and Poisson processes, respectively.  This results from a  development  versus $1/n$ of the measurement operators associated to the associated  discrete-time  stochastic evolution.

We have
$ e^{-i\Htot/n}= \Id - \frac{i}{n} \Htot - \frac{1}{2n^2} \Htot^2 +\mathcal O(1/n^{3/2}),$ where $\Id$ is the identity operator. Now replacing $\Htot$ by its expression given in~\eqref{eq:htot}, we find
$$
e^{-i\Htot/n} \approx  \Id + \tfrac{1}{\sqrt{n}} \sum_\xi \left( V_\xi \Spxi-  V_\xi^\dag \Smxi\right)  - \tfrac{i}{n} H -
\tfrac{1}{2n} \sum_\xi \left(V_\xi^\dag V_\xi \Pgxi + V_\xi V_\xi^\dag \Pexi   \right),
$$
where $\approx$ means up to $\mathcal O(1/n^{3/2})$ terms. Here,  $\Pgxi=(\ket{g}\bra{g})_\xi$ and $\Pexi=(\ket{e}\bra{e})_\xi.$ Note that
$$\Pgxi\ket G=\ket G\quad\textrm{and}\quad\Pexi\ket G=0.$$
Thus the system coupled to its environment  evolves as follows
$$
    e^{-i\Htot/n} \ket\psi \otimes \GG \approx
    \left(\Id - \tfrac{i}{n} H - \tfrac{1}{2n} \sum_\xi V_\xi^\dag V_\xi \right)\ket\psi \otimes\GG
    + \tfrac{1}{\sqrt{n}} \sum_\xi  V_\xi\ket\psi \otimes \Gxi
    .
$$
Let us consider the qubits measurements. The measurement outcomes for  $X_\nu$ are stored  in $x_{\nu-m_P}\in\{-1,1\}$, $x=(x_{\nu-m_P})\in \{-1,1\}^{m_W}$, and those of $Z_\mu$ in
$z_\mu\in\{0,1\}$, $z=(z_\mu)\in\{0,1\}^{m_P}$,  as follows (all these observables commute):
\begin{itemize}
  \item  if during the   measure  of  $X_\nu$, the corresponding qubit collapses to $(\ket g+\ket e)/\sqrt{2} $ (resp. $(\ket g - \ket e)/\sqrt{2}$) set $x_{\nu-m_P}=+1$ (resp. $x_{\nu-m_P}=-1$).
  \item if during the measure of $Z_\mu$, the corresponding qubit collapses to $\ket g$ (resp.$\ket e$), set $z_\mu=0$ (resp. $z_\mu=1$).
\end{itemize}
The probability to get, for two different  $\mu$ and $\mu'$, $z_\mu=z_{\mu'}=1$, is in order of $\mathcal O(1/n^{3/2})$. Thus for $z,$ we need only consider  the following  cases:  either all $z_\mu$ are equal to~$0$  or  only a single one is equal to~$1$, the other ones being $0$.

\medskip

Consider the measurement outcomes $z=(z_\mu)$ and $x=(x_{\nu-m_P})$.  The associated  wave packet collapse of $e^{-i\Htot/n} \ket\psi \otimes \GG $  yields the un-normalized state $M_{x,z} \ket\psi \otimes \ket{x,z},$ where the measurement operator is denoted by $M_{x,z}$ and  where $\ket{x,z}$  is the normalized state of the qubits  characterized  by $X_\nu \ket{x,z} = x_{\nu-m_P} \ket{x,z}$ and $Z_\mu \ket{x,z}= (2z_\mu-1)\ket{x,z}$. We have to consider two situations: either $\sum z_\mu=0$ denoted by $z=0$ or $\sum_\mu z_\mu=1$.
When  $z_\mu=0$ for all $\mu$, some  computations yield
\begin{equation}\label{eq:Mx0}
    M_{x,0} \approx \left(\prod_\nu \tfrac{1}{\sqrt{2}}\right)
    \left( \Id - \tfrac{i}{n} H - \tfrac{1}{2n} \sum_\xi V_\xi^\dag V_\xi + \tfrac{1}{\sqrt{n}} \sum_\nu  x_{\nu-m_P} V_\nu \right)
    .
\end{equation}
When   $z_\mu=1$ and $z_{\mu^\prime}=0$ for all $\mu^\prime\neq \mu$,  similar  computations give
\begin{equation}\label{eq:Mx1}
    M_{x,\mu} \approx    \left(\prod_\nu \tfrac{1}{\sqrt{2}}\right) \frac{V_\mu}{\sqrt{n}}
\end{equation}
where we have denoted $M_{x,z}$ by $M_{x,\mu}$ for such $z$.

\medskip

For an arbitrary state $\rho$ at $t=0$, not necessarily pure as $\rho=\ket\psi\bra\psi$, the state $\rho_1$ at time $t=1/n$ is given by
\begin{equation}~\label{eq:truedyn}
\rho_1 = \frac{M_{x,z} \rho M_{x,z}^\dag}{p_{x,z}(\rho)},
\end{equation}
which happens with probability $p_{x,z}(\rho)= \tr{M_{x,z}\rho M_{x,z}^\dag}$.

The expression of $\rho_1$ is obtained by neglecting the terms of orders strictly greater than $1$ versus $1/n.$ Some usual calculations yield
\begin{multline}~\label{eq:MMxO}
    M_{x,0} \rho M_{x,0}^\dag = \\
    c_W \left(\rho + \tfrac{1}{\sqrt{n}} \sum_\nu x_{\nu-m_P} (  V_\nu \rho+ \rho V_\nu^\dag) +
    \tfrac{1}{n} \Big(-i[H,\rho] - \sum_\mu \tfrac{\{\rho,V_\mu^\dag V_\mu\}}{2}+ \sum_\nu \Lc_\nu(\rho) \Big) \right)
\end{multline}
with probability
\begin{equation}\label{eq:PxO}
    p_{x,0}(\rho) =
    c_W \left(1 + \tfrac{1}{\sqrt{n}} \sum_\nu x_{\nu-m_P} \tr{  (V_\nu+V_\nu^\dag)\rho}-
    \tfrac{1}{n} \sum_\mu \tr{\rho V_\mu^\dag V_\mu} \right)
    .
\end{equation}
The notation $\{A,B\}$ used in~\eqref{eq:MMxO} corresponds to $AB+BA$ and $c_W=(\half)^{m_W}.$

\medskip

Similarly, we have
\begin{equation}\label{eq:MMPxmu}
     M_{x,\mu} \rho M_{x,\mu}^\dag = \tfrac{c_W}{n} V_\mu \rho V_\mu^\dag, \quad\textrm{with probability}\quad
     p_{x,\mu}(\rho) = \tfrac{c_W}{n} \tr{V_\mu \rho V_\mu^\dag}
     .
\end{equation}
Now we can obtain the asymptotic description of the transition for all possible observations. Indeed, the expression of the whole quantum trajectory $\rho_k$ can be obtained by replacing $\rho_1$ by $\rho_{k+1}$ and $\rho$ by $\rho_k$ in Equation~\eqref{eq:truedyn}.

As a conclusion, we find
\begin{align}~\label{eq:discrete}
\rho_{k+1}=
\frac{M_{x,\mu}\rho_k M_{x,\mu}^\dag}{\tr{M_{x,\mu}\rho_k M_{x,\mu}^\dag}}\quad \textrm{with probability}\quad p_{x,\mu}(\rho_k)\\
\end{align}
where  $p_{x,\mu}(\rho_k)$ is given by~(\ref{eq:MMPxmu},\ref{eq:PxO}).
\subsubsection{Stability with respect to initial condition}
Consider the Markov chain~\eqref{eq:discrete}. Assume that we do not know precisely the initial state $\rho_0$ and we have at our disposal an estimate $\rho^e_0$. Assume also that we know the measurement result at step $k$ $(x_k,\mu_k)\in \{-1,1\}^{m_W}\times \{0,1\}$. It is then natural to consider the following recursive update of our estimation $\rho^e_{k+1}$ using the knowledge of measurement result at step $k$ and the previous estimate $\rho^e_k$ (see e.g.,~\cite{wiseman-milburn:book}):
\begin{align}~\label{eq:ediscrete}
\rho^e_{k+1}=
 \frac{M_{x_k,\mu_k}\rho^e_k M_{x_k,\mu_k}^\dag}{\tr{M_{x_k,\mu_k}\rho^e_k M_{x_k,\mu_k}^\dag}}
\end{align}
Note that the probability  $p_{x_k,\mu_k}(\rho_k)$ given by (\ref{eq:PxO},\ref{eq:MMPxmu})  to get $(x_k,\mu_k)$  depends on the hidden state $\rho_k$ and not on $\rho^e_k.$
\begin{rk}~\label{rk:per}
 \rm Let us stress that the above description is not always valid. Indeed, the normalization $\tr{M_{x_k,\mu_k}\rho^e_k M_{x_k,\mu_k}^\dag}$  can vanish and the formula \eqref{eq:ediscrete} is then not defined. This problem does not appear when describing the true evolution since the normalization describing the true state corresponds to the probability of apparition (then if this vanishes this means that the corresponding state can not appear). This problem of non-definition for the discrete-time estimate filter is related to the problem underlined in the definition of \eqref{eq:rhoe}. In general, this question has been taken into account in~\cite{rouchon2010fidelity,somaraju-et-al:acc2012} for the discrete-time filter. In our context focusing on asymptotic evolution, such a problem will not appear.
\end{rk}
\begin{thm}[\cite{rouchon2010fidelity}]~\label{thm:discrete-time stability}
\rm  Consider any arbitrary Markov chain $(\rho_k,\rho^e_k)$ satisfying respectively Equations~\eqref{eq:discrete} and~\eqref{eq:ediscrete}:
 \begin{align*}
(\rho_{k+1},\rho^e_{k+1})=
\left(\tfrac{M_{x,\mu}\rho_k M_{x,\mu}^\dag}{\tr{M_{x,\mu}\rho_k M_{x,\mu}^\dag}},\tfrac{M_{x,\mu}\rho^e_k M_{x,\mu}^\dag}{\tr{M_{x,\mu}\rho^e_k M_{x,\mu}^\dag}}\right)\quad \textrm{with probability}\quad p_{x,\mu}(\rho_k)
\end{align*}
where $p_{x,\mu}(\rho_k)$ is  given by~\eqref{eq:MMPxmu} and~\eqref{eq:PxO}.
 Then the fidelity $(F(\rho_k,\rho^e_k))$ defined in Equation~\eqref{eq:fidelity} is a $(\mathcal F_k)$ submartingale where $\mathcal F_k=\sigma\{(\rho_l,\rho^e_l)\vert l\leq k)\}$. In particular, we have
\begin{equation*}
\EE{F(\rho_{l},\rho^e_{l})|\mathcal F_k}=\EE{F(\rho_{l},\rho^e_{l})|(\rho_k,\rho^e_k)}\geq F(\rho_k,\rho^e_k),
\end{equation*}
 for all $l>k$.
\end{thm}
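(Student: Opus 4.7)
The plan is to prove the one-step inequality $\EE{F(\rho_{k+1},\rho^e_{k+1})\vert\mathcal F_k}\geq F(\rho_k,\rho^e_k)$ and to deduce the assertion for general $l>k$ by iteration and the Markov property. Conditioning on $(\rho_k,\rho^e_k)=(\rho,\rho^e)$, the conditional expectation reads
\begin{equation*}
\EE{F(\rho_{k+1},\rho^e_{k+1})\vert\mathcal F_k}=\sum_{x,\mu} p_{x,\mu}(\rho)\,F\!\left(\tfrac{M_{x,\mu}\rho M_{x,\mu}^\dag}{p_{x,\mu}(\rho)},\tfrac{M_{x,\mu}\rho^e M_{x,\mu}^\dag}{p_{x,\mu}(\rho^e)}\right).
\end{equation*}
Because $\sum_{x,\mu}p_{x,\mu}(\rho)=1$ for every density matrix $\rho$, the family $\{M_{x,\mu}\}$ is a generalized measurement: $\sum_{x,\mu}M_{x,\mu}^\dag M_{x,\mu}=\Id$.

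The heart of the proof is a strong-concavity-type inequality for the Uhlmann fidelity under a selective measurement,
\begin{equation*}
\sqrt{F(\rho,\rho^e)}\;\leq\;\sum_{x,\mu}\sqrt{p_{x,\mu}(\rho)\,p_{x,\mu}(\rho^e)\,F\!\left(\tfrac{M_{x,\mu}\rho M_{x,\mu}^\dag}{p_{x,\mu}(\rho)},\tfrac{M_{x,\mu}\rho^e M_{x,\mu}^\dag}{p_{x,\mu}(\rho^e)}\right)},
\end{equation*}
which I would prove by lifting to purifications. By Uhlmann's theorem, choose purifications $\ket{\Psi},\ket{\Phi}\in\mathcal H\otimes\mathcal H'$ of $\rho,\rho^e$ realizing $|\bket{\Psi|\Phi}|=\sqrt{F(\rho,\rho^e)}$. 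For each $(x,\mu)$, the vectors $(M_{x,\mu}\otimes \Id)\ket{\Psi}$ and $(M_{x,\mu}\otimes \Id)\ket{\Phi}$ are (un-normalised) purifications of the post-measurement states, so applying Uhlmann once more gives, term by term,
\begin{equation*}
\sqrt{p_{x,\mu}(\rho)\,p_{x,\mu}(\rho^e)\,F\!\left(\tfrac{M_{x,\mu}\rho M_{x,\mu}^\dag}{p_{x,\mu}(\rho)},\tfrac{M_{x,\mu}\rho^e M_{x,\mu}^\dag}{p_{x,\mu}(\rho^e)}\right)}\;\geq\;\left|\bra{\Psi}\bigl(M_{x,\mu}^\dag M_{x,\mu}\otimes \Id\bigr)\ket{\Phi}\right|.
\end{equation*}
Summing over $(x,\mu)$ and using $\sum_{x,\mu}M_{x,\mu}^\dag M_{x,\mu}=\Id$ together with the triangle inequality collapses the right-hand side to $|\bket{\Psi|\Phi}|=\sqrt{F(\rho,\rho^e)}$.

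The conclusion follows from Cauchy--Schwarz: writing each term above as $\sqrt{p_{x,\mu}(\rho^e)}\cdot\sqrt{p_{x,\mu}(\rho)\,F(\cdots)}$, squaring, and invoking $\sum_{x,\mu}p_{x,\mu}(\rho^e)=1$ yields
\begin{equation*}
F(\rho,\rho^e)\;\leq\;\sum_{x,\mu}p_{x,\mu}(\rho)\,F\!\left(\tfrac{M_{x,\mu}\rho M_{x,\mu}^\dag}{p_{x,\mu}(\rho)},\tfrac{M_{x,\mu}\rho^e M_{x,\mu}^\dag}{p_{x,\mu}(\rho^e)}\right),
\end{equation*}
which is the one-step submartingale inequality. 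The main obstacle is the purification step just described; the rest is essentially algebra. A minor technical point, flagged in Remark~\ref{rk:per}, concerns outcomes with $p_{x,\mu}(\rho^e)=0$, for which the normalised post-measurement estimate is undefined: such outcomes contribute zero to the Cauchy--Schwarz bound and may be excluded from the sums without affecting the inequality.
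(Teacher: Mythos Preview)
The paper does not supply a proof of this theorem: it is stated with a citation to~\cite{rouchon2010fidelity} and used as input for the continuous-time results. Your argument is correct and is, in fact, essentially the proof given in that reference: the Uhlmann purification step yields the intermediate inequality $\sqrt{F(\rho,\rho^e)}\leq\sum_{x,\mu}\sqrt{p_{x,\mu}(\rho)\,p_{x,\mu}(\rho^e)\,F(\cdots)}$, and the Cauchy--Schwarz step converts this into the one-step submartingale bound. There is nothing to correct; your handling of the degenerate outcomes with $p_{x,\mu}(\rho^e)=0$ is also adequate for the inequality, since those terms only help on the right-hand side.
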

\subsection{Stability of continuous-time filters}~\label{sec:mr}
From the description of the quantum trajectory $(\rho_k)$ and its associated quantum filter $(\rho^e_k)$ with the time parameter $n,$ we can define the associated continuous-time stochastic processes denoted by $(\rho_n(t))$ and $(\rho^e_n(t))$ with
\begin{equation*}
\rho_n(t)=\rho_{[nt]},\quad \rho^e_n(t)=\rho^e_{[nt]}.
\end{equation*}
It is clear that $(\rho_n,\rho^e_n)$ is a Markov process as $(\rho_k,\rho^e_k)$ is a Markov chain.

Before announcing the main result of this section, first let us show the following proposition.
\begin{pr}~\label{pr:conv}
 The Markov process $(\rho_n(t),\rho^e_n(t))$ with $\rho_n$ and $\rho^e_n$ satisfying respectively Equations~\eqref{eq:discrete} and~\eqref{eq:ediscrete} converges in distribution, when $n$ goes to infinity, to the Markov process $(\rho_t,\rho^e_t)$ with $\rho_t$ and $\rho^e_t$ satisfying respectively Equations~\eqref{eq:rho} and~\eqref{eq:rhoe}.
\end{pr}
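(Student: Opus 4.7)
The plan is to apply the classical convergence theorem for Markov processes via convergence of infinitesimal generators, in the spirit of \cite{pellegrini2008existence} for a single quantum trajectory, but carried out simultaneously on the coupled pair $(\rho_n,\rho^e_n)$. First I would introduce the discrete generator: for $f\in C^2(\mathcal{D}\times\mathcal{D})$,
$$\mathcal{A}_n f(\rho,\rho^e) := n\sum_{(x,z)} p_{x,z}(\rho)\left[ f\!\left(\tfrac{M_{x,z}\rho M_{x,z}^\dag}{\tr{M_{x,z}\rho M_{x,z}^\dag}}, \tfrac{M_{x,z}\rho^e M_{x,z}^\dag}{\tr{M_{x,z}\rho^e M_{x,z}^\dag}}\right) - f(\rho,\rho^e)\right],$$
with the sum running over the outcomes $(x,0)$ with $x\in\{-1,+1\}^{m_W}$ and the ``single jump'' outcomes $(x,\mu)$. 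I would then prove that $\mathcal{A}_n f \to \mathcal{A} f$ uniformly on $\mathcal{D}\times\mathcal{D}$, where $\mathcal{A}$ is the generator of the jump--diffusion \eqref{eq:rho}--\eqref{eq:rhoe} read off from It\^o's formula.

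The convergence $\mathcal{A}_n f\to \mathcal{A} f$ would be obtained from a second-order Taylor expansion of $f$ around $(\rho,\rho^e)$, inserting the expansions~\eqref{eq:MMxO}--\eqref{eq:MMPxmu}. In the no-jump branch ($z=0$) the single-step increments of $\rho$ and $\rho^e$ are $\mathcal{O}(1/\sqrt n)$ in the $x_\nu$'s plus $\mathcal{O}(1/n)$ Lindblad terms; summing over $x$ weighted by $p_{x,0}(\rho)$, the quadratic $x$-terms supply the diffusion operator associated with the vector fields $\Lambda_\nu$, while the cross-product between the $\mathcal{O}(1/\sqrt n)$ correction in $p_{x,0}(\rho)$ and the $\mathcal{O}(1/\sqrt n)$ increment of $\rho^e$ generates exactly the innovation-form drift $-\tr{V_\nu\rho^e V_\nu^\dag}\Lambda_\nu(\rho^e)$ of~\eqref{eq:rhoe}. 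The jump branches contribute at rate $np_{x,\mu}(\rho)\to c_W \tr{V_\mu\rho V_\mu^\dag}$, with post-jump states $\Upsilon_\mu(\rho)$ and $\Upsilon_\mu(\rho^e)$, reproducing the Poisson part. Next I would establish tightness of $(\rho_n,\rho^e_n)$ in the Skorokhod space $D([0,T];\mathcal{D}\times\mathcal{D})$: compactness of $\mathcal{D}$, uniform boundedness of single-step increments, and an Aldous-type criterion applied to $f(\rho_n,\rho^e_n)$ suffice. Combining tightness with generator convergence on a core and the well-posedness of the limiting martingale problem (the coefficients of \eqref{eq:rho}--\eqref{eq:rhoe} are polynomial in $(\rho,\rho^e)$ and $\mathcal{D}$ is invariant), Theorem~4.8.2 of Ethier--Kurtz delivers the claimed convergence in distribution.

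The main technical obstacle is the bookkeeping of the coupling between $\rho_n$ and $\rho^e_n$: both are updated by the same measurement operator $M_{x,\mu}$, but the law of the outcome $(x,\mu)$ depends on $\rho_n$ alone. Correctly reproducing, at order $1/n$, the innovation drift in~\eqref{eq:rhoe} hinges on carefully tracking the $1/\sqrt n \times 1/\sqrt n$ cross terms between the $\rho$- and $\rho^e$-directions of the Taylor expansion, together with the $\rho$-dependent bias hidden in $p_{x,0}$. The potential non-definition of $\rho^e_{k+1}$ pointed out in Remark~\ref{rk:per} can be handled as in~\cite{rouchon2010fidelity,somaraju-et-al:acc2012}: asymptotically it plays no role because $\tr{M_{x,\mu}\rho^e M_{x,\mu}^\dag}$ remains uniformly bounded below on the regions effectively visited for $n$ large, so the limiting generator is well defined on the full admissible domain.
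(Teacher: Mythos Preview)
Your proposal is correct and follows essentially the same approach as the paper: both establish convergence of the discrete Markov generator $\mathcal{A}_n$ to the jump--diffusion generator $\mathcal{A}$ via a second-order Taylor expansion (with the cross term between the $\mathcal{O}(n^{-1/2})$ bias in $p_{x,0}(\rho)$ and the $\mathcal{O}(n^{-1/2})$ increment of $\rho^e$ producing the innovation drift $K(\rho,\rho^e)$), and both defer tightness to the arguments of \cite{pellegrini2010markov}. The only cosmetic difference is that the paper writes out the intermediate expansions \eqref{eq:auxO}--\eqref{eq:gens} explicitly, whereas you summarize them and invoke Ethier--Kurtz and well-posedness of the limiting martingale problem to close the argument.
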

\begin{proof}
The approach to prove this proposition is usual in probability theory: we show the convergence of the generator associated to the Markov process $(\rho_n,\rho^e_n)$ towards the one associated to the Markov process $(\rho_t,\rho^e_t).$ Also, we need to prove the tightness property of the sequence $(\rho_n,\rho^e_n)$.

The tightness of the sequence $(\rho_n(t),\rho^e_n(t))$ is guaranteed if, for any $T>0$,   exists $M>0$ such that for all $0\leq t_1 \leq t_2\leq T$ and  $\forall t\in[t_1,t_2]$,
\begin{equation*}
\mathbb{E}\left[\Vert(\rho_n(t_2),\rho^e_n(t_2))-(\rho_n(t),\rho^e_n(t))\Vert^2\Vert(\rho_n(t),\rho^e_n(t))-(\rho_n(t_1),\rho^e_n(t_1))\Vert^2\right]\leq M(t_2-t_1)^2
.
\end{equation*}
We do not develop the arguments for showing the tightness property. They can be obtained directly with the same arguments as~\cite{pellegrini2010markov}.

Thus here, we just focus  on the convergence of the generators. Take $\Ac_n$ and $\Ac$ as the generators associated respectively to the sequences $(\rho_n,\rho^e_n)$ and $(\rho,\rho^e).$
We have to prove that, for any $C^2-$real valued function $(\rho,\rho^e) \mapsto f(\rho,\rho^e)$,
$$
\lim_{n\rightarrow +\infty} \sup_{(\rho,\rho^e)\in\mathcal D^2} |\Ac_n f(\rho,\rho^e) - \Ac f(\rho,\rho^e)| =0.
$$
We have the following expression for the generator $\Ac f(\rho,\rho^e)$
\begin{multline}\label{eq:A}
    \Ac f(\rho,\rho^e)= D_{(\rho,\rho^e)} f.\left(-i[H,\rho] + \sum_\xi \Lc_\xi(\rho), -i[H,\rho^e] + \sum_\xi \Lc_\xi(\rho^e)+K(\rho,\rho^e)\right)
     \\
    + \tfrac{1}{2} \sum_\nu D^2_{(\rho,\rho^e)} f.\left(\Lambda_\nu(\rho), \Lambda_\nu(\rho^e);\Lambda_\nu(\rho), \Lambda_\nu(\rho^e) \right)
    \\
    + \sum_\mu \tr{V_\mu \rho V_\mu^\dag} \left(f\left(\MK{V_\mu \rho V_\mu^\dag}, \MK{V_\mu \rho^e V_\mu^\dag}\right)-f(\rho,\rho^e)-D_{(\rho,\rho^e)}f.\left(\Upsilon_\mu(\rho),\Upsilon_\mu(\rho^e)\right)\right),
\end{multline}
where $$K(\rho,\rho^e):=\sum_\mu \Upsilon_\mu(\rho^e)\Big(\tr{V_\mu\rho V_\mu^\dag}-\tr{V_\mu\rho^e V_\mu^\dag}\Big)+\sum_\nu \Lambda_\nu(\rho^e)\Big(\tr{V_\nu\rho V_\nu^\dag}-\tr{V_\nu\rho^e V_\nu^\dag}\Big)$$
Now let us calculate the expression of $\Ac_n.$ We have, up to $\mathcal O(n^{-1/2})$ terms,
\begin{multline*}
    \Ac_n f(\rho,\rho^e)\approx
    \sum_x  n p_{x,0}(\rho)\left( f\left( \MK{M_{x,0}\rho M_{x,0}^\dag},\MK{M_{x,0}\rho^e M_{x,0}^\dag}\right) -f(\rho,\rho^e)\right)\\
    + \sum_{x,\mu} n p_{x,\mu}(\rho) \left(f\left(\MK{M_{x,\mu} \rho M_{x,\mu}^\dag }, \MK{M_{x,\mu} \rho^e M_{x,\mu}^\dag }\right)-f(\rho,\rho^e)\right)
.
\end{multline*}
Since $\frac{M_{x,0}\rho M_{x,0}^\dag}{p_{x,0}(\rho)}= \rho + \mathcal O(n^{-1/2})$ and also, by using Taylor's formula, we have
\begin{multline*}
f\left( \frac{M_{x,0}\rho M_{x,0}^\dag}{p_{x,0}(\rho)},\frac{M_{x,0}\rho^e M_{x,0}^\dag}{p_{x,0}(\rho^e)}\right)  - f(\rho,\rho^e)= D_{(\rho,\rho^e)}f\cdot \left( \frac{M_{x,0}\rho M_{x,0}^\dag}{p_{x,0}(\rho)} -\rho, \frac{M_{x,0}\rho^e M_{x,0}^\dag}{p_{x,0}(\rho^e)} -\rho^e\right) \\
+ \tfrac{1}{2} D^2_{(\rho,\rho^e)} f\cdot \left( \frac{M_{x,0}\rho M_{x,0}^\dag}{p_{x,0}(\rho)} -\rho, \frac{M_{x,0}\rho^e M_{x,0}^\dag}{p_{x,0}(\rho^e)}-\rho^e; \frac{M_{x,0}\rho M_{x,0}^\dag}{p_{x,0}(\rho)} -\rho, \frac{M_{x,0}\rho^e M_{x,0}^\dag}{p_{x,0}(\rho^e)}-\rho^e \right)+\mathcal O(n^{-3/2}).
\end{multline*}
Consequently,
{\small\begin{align}~\label{eq:gen}
    &n \sum_x  p_{x,0}(\rho) \left(f\left( \frac{M_{x,0}\rho M_{x,0}^\dag}{p_{x,0}(\rho)},\frac{M_{x,0}\rho^e M_{x,0}^\dag}{p_{x,0}(\rho^e)}\right) - f(\rho,\rho^e)\right)\nonumber\\
    &= n D_{(\rho,\rho^e)} f\cdot \left( \sum_x M_{x,0}\rho M_{x,0}^\dag- p_{x,0}(\rho)\rho,\sum_x \Big(M_{x,0}\rho^e M_{x,0}^\dag- p_{x,0}(\rho^e)\rho^e\Big)\frac{p_{x,0(\rho)}}{p_{x,0(\rho^e)}}\right)
   \nonumber \\
    &+ \tfrac{1}{2} \sum_x n p_{x,0}(\rho)  D^2_{(\rho,\rho^e)} f\cdot \left( \frac{M_{x,0}\rho M_{x,0}^\dag}{p_{x,0}(\rho)} -\rho, \frac{M_{x,0}\rho^e M_{x,0}^\dag}{p_{x,0}(\rho^e)}-\rho^e;\frac{M_{x,0}\rho M_{x,0}^\dag}{p_{x,0}(\rho)} -\rho, \frac{M_{x,0}\rho^e M_{x,0}^\dag}{p_{x,0}(\rho^e)}-\rho^e \right)+ \mathcal O(n^{-1/2}),
\end{align}}
where according to~\eqref{eq:MMxO} and~\eqref{eq:PxO}, up to $\mathcal O(n^{-1/2})$ terms, we have
\begin{align}~\label{eq:auxO}
n \sum_x \left(M_{x,0}\rho M_{x,0}^\dag- p_{x,0}(\rho)\rho\right) &\approx
-i[H,\rho] + \sum_\mu \left(\tr{\rho V_\mu^\dag V_\mu} \rho - \tfrac{\{\rho,V_\mu^\dag V_\mu\}}{2}\right)+ \sum_\nu \Lc_\nu(\rho)\nonumber\\
&=-i[H,\rho] +\sum_\xi \Lc_\xi(\rho)-\sum_\mu\Upsilon_{\mu}(\rho)\tr{V_\mu\rho V_\mu^\dag},
\end{align}
since, for any $\nu$,  $\sum_x x_{\nu-m_P}\Lambda_\nu(\rho)=0$ and $\sum_x c_W=1$.
Also, we have
\begin{multline*}
M_{x,0}\rho^e M_{x,0}^\dag-p_{x,0}(\rho^e)\rho^e\\
=c_W\left(\frac{1}{\sqrt{n}}\sum_\nu x_{\nu-m_P}\Lambda_\nu(\rho^e)+\frac{1}{n}\Big(-i[H,\rho^e]+\sum_\xi\mathcal L_\xi(\rho^e)-\sum_\mu\Upsilon_\mu(\rho^e)\tr{V_\mu\rho^e V_\mu^\dag}\Big)\right).
\end{multline*}
Therefore, we find
\begin{multline}~\label{eq:fterme}
n\sum_x \Big(M_{x,0}\rho^e M_{x,0}^\dag-p_{x,0}(\rho^e)\rho^e\Big)\frac{p_{x,0}(\rho)}{p_{x,0}(\rho^e)}\\
=-i[H,\rho^e]+\sum_\xi\mathcal L_\xi(\rho^e)-\sum_\mu\Upsilon_\mu(\rho^e)\tr{V_\mu\rho^e V_\mu^\dag}\\
+\sum_\nu \Lambda_\nu(\rho^e)\Big(\tr{V_\nu\rho V_\nu^\dag}-\tr{V_\nu\rho^e V_\nu^\dag}\Big)+\mathcal O(n^{-\half})
\end{multline}
since $\frac{p_{x,0}(\rho)}{p_{x,0}(\rho^e)}=1+\frac{1}{\sqrt{n}}\sum_\nu x_{\nu-m_P}\Big(\tr{(V_\nu+V_\nu^\dag)\rho}-\tr{(V_\nu+V_\nu^\dag)\rho^e}\Big)+\mathcal O(n^{-1/2})$.
 Moreover, we have used the fact that
\begin{multline*}
\sum_x \sum_\nu c_W x_{\nu-m_P} \Lambda_\nu(\rho^e)\sum_{\nu'}x_{\nu'-m_P}\left(\tr{(V_{\nu'}+V_{\nu'}^\dag)\rho}-\tr{(V_{\nu'}+V_{\nu'}^\dag)\rho^e}\right)\\
=\sum_\nu\Lambda_\nu(\rho^e)\left(\tr{(V_\nu+V_\nu^\dag)\rho}-\tr{(V_\nu+V_\nu^\dag)\rho^e}\right),
\end{multline*}
since $x_{\nu-m_P}^2=1, \sum_x c_W x_{\nu-m_P}^2=1$ and $$\forall \nu\neq \nu':\quad\sum_x c_W x_{\nu-m_P} x_{\nu'-m_P}\Lambda_\nu(\rho^e)\left(\tr{(V_{\nu'}+V_{\nu'}^\dag)\rho}-\tr{(V_{\nu'}+V_{\nu'}^\dag)\rho^e}\right)=0.$$
Equation~\eqref{eq:fterme} can be rewritten as follows
\begin{multline*}
n\sum_x \Big(M_{x,0}\rho^e M_{x,0}^\dag-p_{x,0}(\rho^e)\rho^e\Big)\frac{p_{x,0}(\rho)}{p_{x,0}(\rho^e)}\\
=-i[H,\rho^e]+\sum_\xi\mathcal L_\xi(\rho^e)+K(\rho,\rho^e)-\sum_\mu\Upsilon_\mu(\rho^e)\tr{V_\mu\rho V_\mu^\dag}+\mathcal O(n^{-\half}).
\end{multline*}
As a result, the first righthand side term in Equation~\eqref{eq:gen} can be written as follows
\begin{multline*}
n D_{(\rho,\rho^e)} f\cdot \left( \sum_x M_{x,0}\rho M_{x,0}^\dag- p_{x,0}(\rho)\rho,\sum_x \Big(M_{x,0}\rho^e M_{x,0}^\dag- p_{x,0}(\rho^e)\rho^e\Big)\frac{p_{x,0(\rho)}}{p_{x,0(\rho^e)}}\right)\\
=D_{(\rho,\rho^e)} f.\left(-i[H,\rho] + \sum_\xi \Lc_\xi(\rho), -i[H,\rho^e] + \sum_\xi \Lc_\xi(\rho^e)+K(\rho,\rho^e)\right)\\
-\sum_\mu\tr{V_\mu\rho V_\mu^\dag}D_{(\rho,\rho^e)}f\cdot\Big(\Upsilon_\mu(\rho),\Upsilon_\mu(\rho^e)\Big)+\mathcal O(n^{-1/2}).
\end{multline*}
Now let us calculate the second righthand side term in Equation~\eqref{eq:gen}. To get the zero order terms of $\sum_x n p_{x,0}(\rho)D^2_{(\rho,\rho^e)} f\cdot \left( \frac{M_{x,0}\rho M_{x,0}^\dag}{p_{x,0}(\rho)} -\rho, \frac{M_{x,0}\rho^e M_{x,0}^\dag}{p_{x,0}(\rho^e)}-\rho^e;\frac{M_{x,0}\rho M_{x,0}^\dag}{p_{x,0}(\rho)} -\rho, \frac{M_{x,0}\rho^e M_{x,0}^\dag}{p_{x,0}(\rho^e)}-\rho^e \right),$ we just need to combine the terms of order $n^{-1/2}$  in $\frac{M_{x,0}\rho M_{x,0}^\dag}{p_{x,0}(\rho)} -\rho$ and in $M_{x,0}\rho M_{x,0}^\dag- p_{x,0}(\rho)\rho$. Since
$$
M_{x,0}\rho M_{x,0}^\dag- p_{x,0}(\rho)\rho = \tfrac{c_W }{\sqrt n}
\left( \sum_\nu x_{\nu-m_P} \Lambda_\nu(\rho)  \right) + \mathcal O(n^{-1/2})
$$
and
$$
  \frac{M_{x,0}\rho M_{x,0}^\dag}{p_{x,0}(\rho)}-\rho  \approx
  \frac{ \rho + \tfrac{1}{\sqrt{n}} \sum_\nu x_{\nu-m_P} (  V_\nu \rho+ \rho V_\nu^\dag)}
       {1 + \tfrac{1}{\sqrt{n}} \sum_\nu x_{\nu-m_P} \tr{(V_\nu+ V_\nu^\dag)\rho}}-\rho
    \approx  \tfrac{1}{\sqrt{n}}\left( \sum_\nu x_{\nu-m_P} \Lambda_\nu(\rho)  \right) +\mathcal O(n^{-1/2}),
$$
we get, up to $\mathcal O(n^{-1/2})$ terms
\begin{multline*}
    \sum_x n p_{x,0}(\rho) D^2_{(\rho,\rho^e)} f\cdot\left( \frac{M_{x,0}\rho M_{x,0}^\dag}{p_{x,0}(\rho)} -\rho, \frac{M_{x,0}\rho^e M_{x,0}^\dag}{p_{x,0}(\rho^e)}-\rho^e;\frac{M_{x,0}\rho M_{x,0}^\dag}{p_{x,0}(\rho)} -\rho, \frac{M_{x,0}\rho^e M_{x,0}^\dag}{p_{x,0}(\rho^e)}-\rho^e\right)
    \approx \\
    \sum_{x,\nu,\nu'} c_W x_{\nu-m_P} x_{\nu'} D^2_{(\rho,\rho^e)} f\cdot \left(\Lambda_\nu(\rho), \Lambda_{\nu}(\rho^e); \Lambda_{\nu'}(\rho), \Lambda_{\nu'}(\rho^e)\right)
    = \sum_\nu D^2_{(\rho,\rho^e)} f\cdot\left( \Lambda_\nu(\rho), \Lambda_{\nu}(\rho^e);\Lambda_\nu(\rho), \Lambda_{\nu}(\rho^e) \right),
\end{multline*}
where for the above equality, we have used the following facts $$x_{\nu-m_P}^2=1, \quad \sum_x c_W=1\quad\textrm{and}\quad\forall \nu\neq\nu':\quad\sum_x c_W x_{\nu-m_P} x_{\nu'-m_P} D^2_\rho f(\rho)\cdot \left( \Lambda_\nu(\rho), \Lambda_{\nu'}(\rho) \right)=0.$$
Thus, we find
\begin{multline}~\label{eq:genf}
 \sum_x  p_{x,0}(\rho)\left( f\left( \frac{M_{x,0}\rho M_{x,0}^\dag}{p_{x,0}(\rho)},\frac{M_{x,0}\rho^e M_{x,0}^\dag}{p_{x,0}(\rho^e)} \right) -f(\rho,\rho^e)\right)
    \\
    =D_{(\rho,\rho^e)} f.\left(-i[H,\rho] + \sum_\xi \Lc_\xi(\rho), -i[H,\rho^e] + \sum_\xi \Lc_\xi(\rho^e)+K(\rho,\rho^e)\right)\\
-\sum_\mu\tr{V_\mu\rho V_\mu^\dag}D_{(\rho,\rho^e)}f\cdot\Big(\Upsilon_\mu(\rho),\Upsilon_\mu(\rho^e)\Big).
\\
+\tfrac{1}{2}\sum_\nu  D^2_\rho f(\rho) \cdot \left(\Lambda_\nu(\rho),\Lambda_\nu(\rho^e);\Lambda_\nu(\rho),\Lambda_\nu(\rho^e)\right)+\mathcal O(n^{-1/2})
    .
\end{multline}
According to~\eqref{eq:MMPxmu}, we have also
\begin{multline}~\label{eq:gens}
\sum_{x,\mu} np_{x,\mu}(\rho) \left(f\left(\MK{M_{x,\mu}\rho M_{x,\mu}^\dag},\MK{M_{x,\mu}\rho^e M_{x,\mu}^\dag} \right)-f(\rho,\rho^e)\right)
\\=
\sum_\mu \tr{V_\mu \rho V_\mu^\dag} \left( f\left(\MK{V_\mu \rho V_\mu^\dag},\MK{V_\mu \rho^e V_\mu^\dag}\right) - f(\rho,\rho^e)\right).
\end{multline}
Finally, by Equations~\eqref{eq:genf} and~\eqref{eq:gens}, we find that $$\Ac_n f(\rho,\rho^e) = \Ac f(\rho,\rho^e) + \mathcal O(n^{-1/2}),$$ which finishes the proof of Proposition~\ref{pr:conv}.
\end{proof}
 Now we are in the state to announce the main result of this section.
\begin{thm}~\label{thm:continuous-time stability}
 Consider the Markov process $(\rho_t,\rho^e_t)$ satisfying respectively Equations~\eqref{eq:rho} and~\eqref{eq:rhoe}. Then the fidelity  $(F(\rho_t,\rho^e_t))$ defined in Equation~\eqref{eq:fidelity} is a $(\mathcal F_t)-$submartingale, where $\mathcal F_t=\sigma\{(\rho_\tau,\rho^e_\tau)|\tau\leq t\}$. In particular, we have,  $$\EE{F(\rho_\tau,\rho^e_\tau)|\mathcal F_t}=\EE{F(\rho_\tau,\rho^e_\tau)|(\rho_t,\rho^e_t)}\geq F(\rho_t,\rho^e_t),$$ for all $\tau\geq t.$
\end{thm}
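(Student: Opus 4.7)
The plan is to lift the discrete-time submartingale result of Theorem~\ref{thm:discrete-time stability} to continuous time by combining it with the convergence in distribution of Proposition~\ref{pr:conv}. Since weak convergence does not behave well under conditional expectations, I would first recast the desired submartingale inequality as an unconditional expectation of a bounded continuous test function of the trajectory at finitely many past times, then pass to the limit in $n$, and finally recover the conditional form by a monotone-class argument.

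Concretely, fix $0 \leq t_1 < t_2$, an integer $p \geq 1$, times $0 \leq s_1 \leq \cdots \leq s_p \leq t_1$, and a bounded continuous $g : \mathcal{D}^{2p} \to \mathbb{R}$. Applying the iterated discrete-time submartingale property of Theorem~\ref{thm:discrete-time stability} to the chain $(\rho_k, \rho^e_k)$ between the indices $[nt_1]$ and $[nt_2]$, and then conditioning on the past, yields
\[
\EE{g\bigl(\rho_n(s_1),\rho^e_n(s_1),\ldots,\rho_n(s_p),\rho^e_n(s_p)\bigr)\bigl(F(\rho_n(t_2),\rho^e_n(t_2))-F(\rho_n(t_1),\rho^e_n(t_1))\bigr)} \geq 0.
\]

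Next, since $\mathcal{D}$ is compact, the fidelity $F$ defined in~\eqref{eq:fidelity} is continuous and bounded on $\mathcal{D}^2$, so the integrand above is a bounded continuous functional on $\mathcal{D}^{2(p+2)}$. Proposition~\ref{pr:conv}, together with the fact that the jump-diffusive limit $(\rho_t, \rho^e_t)$ has no fixed-time jumps (Poisson jumps occur at random times almost surely), yields convergence in distribution of the finite-dimensional marginals at every finite family of times. The portmanteau theorem then gives
\[
\EE{g\bigl(\rho_{s_1},\rho^e_{s_1},\ldots,\rho_{s_p},\rho^e_{s_p}\bigr)\bigl(F(\rho_{t_2},\rho^e_{t_2})-F(\rho_{t_1},\rho^e_{t_1})\bigr)} \geq 0.
\]
As such cylindrical products generate $\mathcal{F}_{t_1}=\sigma\{(\rho_\tau,\rho^e_\tau):\tau\leq t_1\}$, a standard monotone-class argument promotes this to $\EE{F(\rho_{t_2},\rho^e_{t_2})\mid \mathcal{F}_{t_1}}\geq F(\rho_{t_1},\rho^e_{t_1})$ almost surely, which is the submartingale property. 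The identity with conditioning on $(\rho_{t_1},\rho^e_{t_1})$ alone follows from the Markov property of the limit process.

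The main obstacle is precisely the step of passing a conditional expectation to the limit under convergence in distribution, which in general is not allowed. The remedy is the reformulation above in terms of unconditional expectations against bounded continuous test functionals of the past, for which convergence of finite-dimensional distributions suffices. Compactness of $\mathcal{D}^2$ ensures $F$ is both continuous and bounded, so no uniform integrability estimate is required; the only subtle point is that the fixed times $s_i, t_1, t_2$ are continuity points of the law of the limit, which is automatic because the limit has no deterministic jump times.
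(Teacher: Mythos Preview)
Your approach is essentially the paper's: combine the discrete-time submartingale property of Theorem~\ref{thm:discrete-time stability} with the convergence in distribution of Proposition~\ref{pr:conv}, and then pass to the limit. The paper's own proof is terse and simply asserts that weak convergence implies $\EE{F(\rho_n(\tau),\rho^e_n(\tau))\mid (\rho_n(t),\rho^e_n(t))}\to \EE{F(\rho_\tau,\rho^e_\tau)\mid (\rho_t,\rho^e_t)}$; you correctly flag that this step is delicate and replace it by testing against bounded continuous cylinder functionals of the past and then invoking a monotone-class argument. That refinement is sound and buys you a rigorous passage to the limit without extra hypotheses.

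One small correction: the displayed inequality
\[
\EE{g\bigl(\rho_n(s_1),\rho^e_n(s_1),\ldots\bigr)\bigl(F(\rho_n(t_2),\rho^e_n(t_2))-F(\rho_n(t_1),\rho^e_n(t_1))\bigr)} \geq 0
\]
follows from the submartingale property only when $g\geq 0$; you should state this explicitly. Nonnegative bounded continuous cylinder functions still generate $\mathcal F_{t_1}$ via the monotone-class theorem, so nothing else changes.
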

\begin{proof}
By Theorem~\ref{thm:discrete-time stability}, we know that the fidelity $F(\rho_n,\rho^e_n)$ is a submartingale with respect to the natural filtration of $(\rho_k,\rho^e_k)$.
  In terms of $(\rho_n(t),\rho^e_n(t)$, it follows that
\begin{equation}\label{eq:submartingale}
\EE{F(\rho_n(\tau),\rho^e_n(\tau))|(\rho_n(t),\rho^e_n(t)) }\geq {F(\rho_n(t),\rho^e_n(t))},
\end{equation}
for all $\tau\geq t.$  In Proposition~\ref{pr:conv}, we showed that $(\rho_n(t),\rho^e_n(t))$  converges in distribution to $(\rho_t,\rho^e_t)$.
This implies  that for any continuous real function $f(\rho,\rho^e)$ we have
$$
\forall \tau \geq t, \quad \lim_{n\rightarrow+\infty} \EE{f(\rho_n(\tau),\rho^e_n(\tau))|(\rho_n(t),\rho_n^e(t)}  =  \EE{f(\rho_\tau,\rho^e_\tau)|(\rho_t,\rho_t^e)}
.
$$
As $F$  is continuous, the limit of~\eqref{eq:submartingale} for $n$ tending to $\infty$ yields
$\EE{F(\rho_\tau,\rho^e_\tau)| (\rho_t,\rho_t^e)}\geq F(\rho_t,\rho^e_t)$
for all $\tau\geq t.$
Since  $(\rho_t,\rho^e_t)$ is a Markov process, the result follows.
\end{proof}
\section{Imperfect measurements}~\label{sec:imperfect}
 By imperfect measurements, we mean both unread measurements performed by the environment (decoherence) and active measurements performed by non-ideal detectors. Starting from the discrete-time case  experimentally used in~\cite{sayrin-et-al:nature2011} and detailed  in~\cite{somaraju-et-al:acc2012},  we derive the continuous-time optimal filters driven either by Poisson, Wiener processes or both of them. We obtain the SMEs and  their associated quantum  filters.

\subsection{Discrete-time  filters}~\label{sec:discretetime}
The presentation here is very much inspired from~\cite{somaraju-et-al:acc2012} and adapted to~\eqref{eq:discrete}  for $n$ large. The  jump-events labelled by $(x,z)$  are of type $(x,\mu)$ with $\mu \in\{0,1,\cdots,m_P\}$. The effectively  measured events are  labelled by $s \in\{1,\ldots,m\}$, with $m$ denoting the number of distinct  experimental detector outcomes. Suppose that we know the correlation between the  jump-events $(x,\mu)$ and  the experimental  detection $s$. These correlations  are modeled here  by classical
probabilities through a stochastic matrix $\eta$:  $\eta^n_{s, (x,\mu)}$ that gives the probability of experimental  detection  $s$ knowing that the effective jump-event   is $(x,\mu).$
 Since $\eta^n_{s,(x,\mu)}\geq 0$ and for each $(x,\mu),$ $\sum_{s=1}^m\eta^n_{s,(x,\mu)}=1,$ the matrix $\eta^n=(\eta^n_{s,(x,\mu)})$ is a left stochastic matrix.  We assume the following asymptotic for  $\eta^n$,
 \begin{equation}\label{eq:etan}
   \eta^n= \etaI  +\frac{\etaT}{n}+ \mathcal O(n^{-2}),
 \end{equation}
where $\etaI$ is a left stochastic matrix.

We still denote by $\rho$ the state associated to these experimental  detections which is the best estimation of the system state knowing the initial state and all previous experimental  detections. Following~\cite{somaraju-et-al:acc2012}, it obeys to the following Markov process:
 \begin{equation}~\label{eq:discrete-imp}
\rho_{k+1}=\frac{\Mc_s(\rho_k)}{\tr{\Mc_s(\rho_k)}}\quad \textrm{ with probability } p_s(\rho_k)=\tr{\Mc_s(\rho_k)}
\end{equation}
 where  $\Mc_s(\rho)\triangleq \sum_{x,\mu}\eta^n_{s,(x,\mu)} M_{x,\mu}\rho M_{x,\mu}^\dag,$   $M_{x,0}$ and $M_{x,\mu}$ are given respectively by Equations~\eqref{eq:Mx0} and~\eqref{eq:Mx1} depending also on $n$.

 Suppose that the initial state of dynamics~\eqref{eq:discrete-imp} is not well known. Let $\rho^e_0$ be an arbitrary initial estimate, the estimate discrete-time filter satisfies the following dynamics
\begin{equation}~\label{eq:ediscrete-imp}
\rho^e_{k+1}=\frac{\Mc_{s_k}(\rho^e_k)}{\tr{\Mc_{s_k}(\rho^e_k)}}
\end{equation}
where  $s_k$ corresponds to the experimental detection at time-step $k$.
\begin{rk}
\rm For the same reason given in Remark~\ref{rk:per}, the above description is not always valid. Since the normalization $\tr{\Mc_s(\rho^e_k)}$ can vanish and the formula~\eqref{eq:ediscrete-imp} is then not well defined. Such a problem will not be appeared when $\rho^e_k$ is full rank. (When $\tr{\Mc_s(\rho^e_k)}=0,$ we can still define the value of $\rho^e_{k+1},$ see more details in~\cite{somaraju-et-al:acc2012}.) Again, as we consider the asymptotic evolution, such a problem will not appear.
\end{rk}
We now state a theorem ensuring the stability of such estimation procedure whatever the initial state $\rho^e_0$ is.
\begin{thm}[\cite{rouchon2010fidelity,somaraju-et-al:acc2012}]~\label{thm:discrete-imp}
\rm Consider the  Markov chain $(\rho_k,\rho^e_k)$ satisfying~\eqref{eq:discrete-imp} and~\eqref{eq:ediscrete-imp}:
$$
(\rho_{k+1},\rho_{k+1}^e) =\left(\tfrac{\Mc_s(\rho_k)}{\tr{\Mc_s(\rho_k)}}, \tfrac{\Mc_s(\rho^e_k)}{\tr{\Mc_s(\rho^e_k)}}\right)\quad \textrm{ with probability } p_s(\rho_k)=\tr{\Mc_s(\rho_k)}
.
$$
Then the fidelity $(F(\rho_k,\rho^e_k))$ is a $(\mathcal F_k)-$submartingale where $\mathcal F_k=\sigma\{(\rho_l,\rho^e_l)\vert l\leq k)\}$. In particular, we have
\begin{equation*}
\EE{F(\rho_{l},\rho^e_{l})|\mathcal F_k}=\EE{F(\rho_{l},\rho^e_{l})|(\rho_k,\rho^e_k)}\geq F(\rho_k,\rho^e_k),
\end{equation*}
 for all $l>k$.
\end{thm}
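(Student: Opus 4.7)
Since the statement is quoted from \cite{rouchon2010fidelity,somaraju-et-al:acc2012}, the plan is to follow the argument in those references. The central object is the root fidelity $\tilde F(\rho,\sigma):=\sqrt{F(\rho,\sigma)}=\tr{\sqrt{\sqrt{\rho}\sigma\sqrt{\rho}}}$, together with two of its properties: the bilinear scaling $\tilde F(a\rho,b\sigma)=\sqrt{ab}\,\tilde F(\rho,\sigma)$ for $a,b\geq 0$, and the super-additivity on sums of positive semidefinite matrices $\tilde F(A_1+A_2,B_1+B_2)\geq \tilde F(A_1,B_1)+\tilde F(A_2,B_2)$, which follows from the strong concavity of the fidelity combined with homogeneity.

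First, an explicit computation of the conditional expectation, using the bilinear scaling, gives
\begin{equation*}
  \EE{F(\rho_{k+1},\rho^e_{k+1}) | \mathcal F_k}
  = \sum_s p_s(\rho_k)\,F\!\left(\tfrac{\Mc_s(\rho_k)}{p_s(\rho_k)},\tfrac{\Mc_s(\rho^e_k)}{p_s(\rho^e_k)}\right)
  = \sum_s \frac{\tilde F(\Mc_s(\rho_k),\Mc_s(\rho^e_k))^{2}}{p_s(\rho^e_k)}.
\end{equation*}
The Cauchy--Schwarz inequality $\sum_s a_s^2/b_s\geq (\sum_s a_s)^2/\sum_s b_s$, combined with $\sum_s p_s(\rho^e_k)=1$, then yields the lower bound $\bigl(\sum_s \tilde F(\Mc_s(\rho_k),\Mc_s(\rho^e_k))\bigr)^{2}$.

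Next, applying super-additivity inside each $s$ and then summing using the stochastic identity $\sum_s\eta^n_{s,(x,\mu)}=1$ gives
\begin{equation*}
\sum_s\tilde F(\Mc_s(\rho),\Mc_s(\rho^e))\;\geq\;\sum_s\sum_{x,\mu}\eta^n_{s,(x,\mu)}\tilde F(M_{x,\mu}\rho M_{x,\mu}^\dag,M_{x,\mu}\rho^e M_{x,\mu}^\dag)\;=\;\sum_{x,\mu}\tilde F(M_{x,\mu}\rho M_{x,\mu}^\dag,M_{x,\mu}\rho^e M_{x,\mu}^\dag).
\end{equation*}

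To close the chain I need the sum-over-Kraus lemma: whenever $\sum_i K_i^\dag K_i=I$, one has $\sum_i \tilde F(K_i\rho K_i^\dag,K_i\sigma K_i^\dag)\geq \tilde F(\rho,\sigma)$. The proof rests on Uhlmann's theorem: pick purifications $|\psi\rangle,|\phi\rangle$ of $\rho,\sigma$ that saturate $|\langle\psi|\phi\rangle|=\tilde F(\rho,\sigma)$, decompose $\langle\psi|\phi\rangle=\sum_i\langle\psi|(K_i^\dag K_i\otimes I)|\phi\rangle=\sum_i\langle(K_i\otimes I)\psi|(K_i\otimes I)\phi\rangle$, apply the triangle inequality, and use Uhlmann's bound again to majorize each term by $\tilde F(K_i\rho K_i^\dag,K_i\sigma K_i^\dag)$. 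Applied to $\{M_{x,\mu}\}$ this gives $\sum_{x,\mu}\tilde F(M_{x,\mu}\rho M_{x,\mu}^\dag,M_{x,\mu}\rho^e M_{x,\mu}^\dag)\geq \tilde F(\rho,\rho^e)$, and chaining all the inequalities produces $\EE{F(\rho_{k+1},\rho^e_{k+1}) | \mathcal F_k}\geq F(\rho_k,\rho^e_k)$. The Markov property of $(\rho_k,\rho^e_k)$ then upgrades this one-step inequality to the full submartingale statement. The main obstacle is precisely this sum-over-Kraus lemma: it is strictly stronger than the usual data-processing inequality for fidelity, and Uhlmann's purification trick is what makes it go through; the rest is careful bookkeeping of Cauchy--Schwarz and super-additivity in the correct order.
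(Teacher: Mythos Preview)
The paper does not contain its own proof of this theorem; it is quoted from \cite{rouchon2010fidelity,somaraju-et-al:acc2012} and stated without argument. Your reconstruction is correct and matches the approach of those references: the chain Cauchy--Schwarz $\to$ super-additivity of the root fidelity (joint concavity plus homogeneity) $\to$ the Uhlmann-based sum-over-Kraus inequality is precisely the route taken there, so there is nothing to contrast.
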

As far as we know, the continuous-time  asymptotic versions of the discrete-time dynamics~\eqref{eq:discrete-imp} and~\eqref{eq:ediscrete-imp}  associated to a left stochastic matrix $\eta^n$ with asymptotics~\eqref{eq:etan},  have not been established up to now.  In the following, we derive such    continuous-time SMEs which are the limits of these discrete-time dynamics and  prove their stability.
\subsection{Continuous-time filters as limit of discrete-time filters}~\label{sec:design}
Take  $n$ large  and   consider the piece-wise constant  continuous-time stochastic processes denoted by $\rho_n(t)$ and $\rho^e_n(t)$ with
\begin{equation}\label{eq:rhonrhoen}
\rho_n(t)=\rho_{[nt]},\quad\rho^e_n(t)=\rho^e_{[nt]}.
\end{equation}
where the discrete-time process $(\rho_{k},\rho^e_{k})$ obeys to~\eqref{eq:discrete-imp} with $k=[nt]$, the entire part of $nt$.
It is clear that $(\rho_n(t),\rho^e_n(t))$ is a Markov process, as $(\rho_k,\rho^e_k)$ is the Markov chain of Theorem~\ref{thm:discrete-imp}.

Now suppose that $\rho_t$ and $\rho^e_t$ be respectively the solutions of the continuous-time dynamics of the true filter and its estimate at time $t$. Let $\Ac_n$ and $\Ac$ be respectively  the Markov generators of $(\rho_n,\rho^e_n)$ and $(\rho,\rho^e)$. Then for all $C^2-$real valued function $f,$ we are looking for  the continuous-time processes $\rho$ and $\rho^e$ such that  the following limit holds.
$$
\lim_{n\rightarrow\infty}\sup_{(\rho,\rho^e)\in\mathcal D^2}\vert\mathcal A_n f(\rho,\rho^e)-\mathcal A f(\rho,\rho^e)\vert=0.
$$
We will see that such continuous-time  limit depends essentially on the structure of $\etaI=\lim_{n\rightarrow\infty} \eta^n$  described in Lemma~\ref{lem:etaI} and yields  a generalization of  the usual stochastic master equations driven by Wiener and Poisson processes.

\begin{lem}\label{lem:etaI}
\rm Take  the left stochastic matrix $\etaI_{s,(x,\mu)}$ defined by~\eqref{eq:etan} with line index $s\in\{1,\ldots,m\}$ and column index $(x,\mu)\in\{-1,1\}^{m_W}\times\{0,1,\ldots,m_P\}$. Consider the  following partition $(S^W,S^P)$ of  $\{1,\cdots,m\}$ labeling the number of experimental detections:
$$
S^P =\bigg\{ s\in\{1,\ldots,m\}\big| \sum_{x\in \{-1,1\}^{m_W}} \etaI_{s,(x,0)} =0 \bigg\}, \quad S^W =\bigg\{ s\in\{1,\ldots,m\}\big| \sum_{x\in \{-1,1\}^{m_W}} \etaI_{s,(x,0)} >0 \bigg\}.
$$
Then we have
\begin{itemize}

  \item The  $\etab_{s,\mu}$'s defined by
  \begin{equation}\label{eq:etabsmu}
    \etab_{s,\mu} \triangleq 2^{-m_W} \sum_{x\in \{-1,1\}^{m_W}}  \etaI_{s,(x,\mu)}, \quad (s,\mu)\in S^P\times\{1,\ldots,m_P\},
  \end{equation}
  satisfy
   \begin{equation}\label{eq:etabsmuC}
   \forall (s,\mu)\in S^P\times\{1,\ldots,m_P\}, ~ 0\leq \etab_{s,\mu}\leq 1 \text{ and }
    \forall \mu\in \{1,\ldots,m_P\}\sum_{s\in S^P} \etab_{s,\mu} \leq 1.
  \end{equation}

  \item The singular values of the  matrix $E$  with  entries
   \begin{equation}\label{eq:Eentry}
  \forall (s,\nu) \in S^W\times \{m_P,\ldots,m_P+m_W\},~    E_{s,\nu} \triangleq  2^{-{m_W}/2}
     \frac{\sum_{x}x_{\nu-m_P} \etaI_{s,(x,0)}}{\sqrt{\sum_{x} \etaI_{s,(x,0)}}}
  \end{equation}
belong to $[0,1]$.

\end{itemize}
\end{lem}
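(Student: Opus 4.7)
The first bullet is pure bookkeeping from the left-stochastic property. Nonnegativity of $\etab_{s,\mu}$ is inherited from $\etaI_{s,(x,\mu)}\geq 0$. The bound $\etab_{s,\mu}\leq 1$ follows because each $\etaI_{s,(x,\mu)}\leq 1$ (a consequence of $\sum_{s=1}^{m}\etaI_{s,(x,\mu)}=1$ and nonnegativity) and we average over the $2^{m_W}$ values of $x$. For the constraint $\sum_{s\in S^P}\etab_{s,\mu}\leq 1$, I would exchange the order of summation:
\[
\sum_{s\in S^P}\etab_{s,\mu}=2^{-m_W}\sum_{x}\sum_{s\in S^P}\etaI_{s,(x,\mu)}\leq 2^{-m_W}\sum_{x}\sum_{s=1}^{m}\etaI_{s,(x,\mu)}=2^{-m_W}\cdot 2^{m_W}=1,
\]
the inequality using nonnegativity of the dropped terms and the final equality using left stochasticity of the column $(x,\mu)$.

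For the second bullet my plan is to prove directly that $\|Ev\|^{2}\leq \|v\|^{2}$ for every column vector $v$, which is equivalent to all singular values of $E$ lying in $[0,1]$. Abbreviate $P_{s,x}:=\etaI_{s,(x,0)}$ and $\lambda_s:=\sum_{x}P_{s,x}$, so that for $s\in S^W$ the entry reads $E_{s,\nu}=2^{-m_W/2}\bigl(\sum_{x}x_{\nu-m_P}P_{s,x}\bigr)/\sqrt{\lambda_s}$. The key observation, extracted from the definition of the partition, is that $P_{s,x}=0$ for every $s\in S^P$ and every $x$, since a sum of nonnegatives vanishes iff each summand does; combined with left stochasticity of $\etaI$ on each column $(x,0)$ this yields the redistribution identity
\[
\sum_{s\in S^W}P_{s,x}=1\qquad\text{for every }x\in\{-1,1\}^{m_W}.
\]
Given $v$, set $L_v(x):=\sum_{\nu}v_{\nu}\,x_{\nu-m_P}$. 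Then $(Ev)_s=2^{-m_W/2}\bigl(\sum_{x}P_{s,x}L_v(x)\bigr)/\sqrt{\lambda_s}$, and Cauchy--Schwarz with weights $P_{s,\cdot}$ produces $(Ev)_s^{2}\leq 2^{-m_W}\sum_{x}P_{s,x}L_v(x)^{2}$. Summing over $s\in S^W$ and applying the redistribution identity collapses this to $\|Ev\|^{2}\leq 2^{-m_W}\sum_{x}L_v(x)^{2}$. Expanding the square and using the Walsh orthogonality $\sum_{x}x_{i}x_{j}=2^{m_W}\delta_{ij}$ on the Boolean hypercube identifies the right-hand side as exactly $\|v\|^{2}$.

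The only step that requires any insight is the redistribution identity $\sum_{s\in S^W}P_{s,x}=1$: it turns the asymmetric partition $(S^P,S^W)$ into a genuine column-stochastic property restricted to $S^W$, which is precisely what permits the Cauchy--Schwarz bound to close against the normalization $\sqrt{\lambda_s}$ in the denominator of $E_{s,\nu}$. Once this is observed, the rest is routine Cauchy--Schwarz combined with the orthogonality of parity characters $x\mapsto x_i$ on $\{-1,1\}^{m_W}$, and no further structure of the stochastic matrix is needed.
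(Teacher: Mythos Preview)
Your proof is correct and follows essentially the same route as the paper: both arguments bound $\|Ev\|^2$ by $\|v\|^2$ via the inequality $(\sum_x P_{s,x}L_v(x))^2\leq \lambda_s\sum_x P_{s,x}L_v(x)^2$ (you phrase it as Cauchy--Schwarz with weights, the paper as convexity of $\alpha\mapsto\alpha^2$ against the probability weights $P_{s,x}/\lambda_s$, which are the same statement), then sum over $s\in S^W$ and finish with the Walsh orthogonality $\sum_x x_ix_j=2^{m_W}\delta_{ij}$. Your explicit isolation of the redistribution identity $\sum_{s\in S^W}P_{s,x}=1$ is a clarification the paper leaves implicit when it invokes $\sum_s\etaI_{s,(x,0)}=1$ while summing only over $S^W$.
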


\begin{proof} Inequality~\eqref{eq:etabsmuC} is a direct consequence of $S^P$ definition, because for any $(x,\mu),$ $\sum_s \etaI_{s,(x,\mu)}=1$ and $x$  belongs to a set of cardinal $2^{m_W}$.

The matrix $E$ is well defined because, by definition of $S^W$, the denominators in~\eqref{eq:Eentry} are all strictly positive.
The singular values of $E$ do not exceed $1$, if and only if, for any unitary vector $z\in \mathbb{R}^{m_W}$, the Euclidian norm of $E z$ does not exceed $1$.
With $z=(z_j)$ we have
$$
\|E z\|^2 = 2^{-m_W}  \sum_{s\in S^W} \frac{\left(\sum_{x,j} x_j z_j \etaI_{s,(x,0)} \right)^2}{\sum_{x} \etaI_{s,(x,0)}}
$$
where  $(x,j)$ varies in $\{-1,1\}^{m_W} \times\{1,\ldots,m_W\}$.  With
$\vartheta_s =\sum_{x} \etaI_{s,(x,0)}$, $ \vartheta_{s,x}= \frac{\etaI_{s,(x,0)}}{\vartheta_s}$ and  $\bket{x|z} = \sum_j x_jz_j$
we have
$
\|E z\|^2 = 2^{-m_W}  \sum_{s\in S^W} \vartheta_s \left(\sum_{x} \bket{x|z} \vartheta_{s,x}  \right)^2
.
$
By convexity of $\alpha \mapsto \alpha^2$, we have
$$
\forall s \in S^W, \quad \left(\sum_{x} \vartheta_{s,x} \bket{x|z}   \right)^2\leq  \sum_{x} \vartheta_{s,x} \left( \bket{x|z}  \right)^2
$$
since $\sum_x \vartheta_{s,x}=1$ and $\vartheta_{s,x} \geq 0$.
Thus
 $$
 \|E z\|^2 \leq
  2^{-m_W}  \sum_{s\in S^W}  \etaI_{s,(x,0)} \sum_{x}  \left( \bket{x|z}   \right)^2
  =2^{-m_W} \sum_{x}  \left( \bket{x|z}   \right)^2
  $$
  since $\sum_s  \etaI_{s,(x,0)}= 1$ for any $x$. We have
$$
    \sum_{x}  \left( \bket{x|z}   \right)^2 = \sum_{x,j,j'} x_j x_{j'} z_j z_{j'}=  \sum_{j,j'} z_j z_j' \left(\sum_x x_j x_{j'}\right)
    \\
    = \sum_j z_j^2 \left(\sum_x x_j^2 \right)= 2^{m_W}\sum_j z_j^2
$$
since  $x$  in $\{-1,1\}^{m_W}$ implies that   $\sum_x x_j x_{j'}=0$ for $j\neq j'$. Thus  $\|E z\|^2 \leq 1$ when $\|z\| =1$.
\end{proof}

Next theorem provides a generalization of usual SME driven by Wiener processes with detection errors  to SME driven  simultaneously by Wiener and Poisson processes with detections errors.

\begin{thm}~\label{thm:pw}
\rm Consider $(\rho_n(t),\rho^e_n(t))$ defined by~\eqref{eq:rhonrhoen} and associated  to~(\ref{eq:discrete-imp},\ref{eq:ediscrete-imp})  with a left stochastic matrix $\eta^n$ verifying~\eqref{eq:etan}. Then, for $n\rightarrow +\infty$, the process  $(\rho_n(t),\rho^e_n(t))$  converges in distribution to the unique solutions of
\begin{multline}~\label{eq:Poisson and Wiener}
d\rho_t =-i[H,\rho_t] \,dt+ \left(\sum_\xi V_\xi\rho_t V_\xi^\dag - \half(V_\xi^\dag V_\xi\rho_t +\rho_t V_\xi^\dag V_\xi ) \right)\,dt
\\
+\sum_{s\in S^{P}}
   \left(\frac{\thetab_s\rho_t+\sum_{\mu}\etab_{s,\mu}V_\mu\rho_t V_\mu^\dag}{\thetab_s+\sum_{\mu}\etab_{s,\mu}\tr{V_\mu\rho_t V_\mu^\dag}} -\rho_t\right)
      \left(dN_s(t)-\Big( \thetab_s+\sum_{\mu}\etab_{s,\mu}\tr{V_\mu\rho_t V_\mu^\dag}\Big)\,dt\right)
\\
+\sum_{s\in S^{W}} \sqrt{\etab_s}
\left(\sum_{\nu} c_{s,\nu}\left( V_\nu\rho_t + \rho_t V_\nu^\dag - \tr{(V_\nu+V_\nu^\dag)\rho_t} \rho_t \right) \right)\,dW_s(t)
\end{multline}
and
\begin{multline}~\label{eq:cfepw}
d\rho^e_t =-i[H,\rho^e_t] +\left(\sum_\xi V_\xi\rho_t^e V_\xi^\dag - \half(V_\xi^\dag V_\xi\rho_t^e +\rho_t^e V_\xi^\dag V_\xi ) \right)\,dt
\\
+\sum_{s\in S^{P}}
    \left(\frac{\thetab_s\rho^e_t+\sum_{\mu}\etab_{s,\mu}V_\mu\rho^e_t V_\mu^\dag}{\thetab_s+\sum_{\mu}\etab_{s,\mu}\tr{V_\mu\rho^e_t V_\mu^\dag}} -\rho^e_t\right)
    \left(dN_s(t)- \Big( \thetab_s+\sum_{\mu}\etab_{s,\mu}\tr{V_\mu\rho^e_t V_\mu^\dag}\Big)\,dt\right)
\\
+\sum_{s\in S^{W}} \sqrt{\etab_s}
\left(\sum_\nu c_{s,\nu} \left( V_\nu\rho_t^e + \rho_t^e V_\nu^\dag - \tr{(V_\nu+V_\nu^\dag)\rho_t^e} \rho_t^e\right) \right) \times \ldots
\\
\ldots \times
\left(dy_s(t)-   \sqrt{\etab_s}\tr{\sum_\nu c_{s,\nu}(V_\nu+V_\nu^\dag)\rho^e_t}\,dt.\right)
\end{multline}
with  partition $(S_P,S_W)$ of $\{1,\ldots,m\}$ defined in Lemma~\ref{lem:etaI}, with,  in the above sums,  $\mu\in\{1,\ldots,m_P\}$, $\nu\in\{m_P+1,\ldots,m_P+m_W\}$ and $\xi\in\{1,\ldots, m_P+m_W\}$,  and where
\begin{itemize}

  \item  $s\in S^{P}$ is related to the Poisson process $dN_s(t)$ characterized by
$$
\langle dN_s(t)\rangle= \Big( \thetab_s+\sum_{\mu}\etab_{s,\mu}\tr{V_\mu\rho_t V_\mu^\dag}\Big)\,dt
$$
with  $\etab_{s,\mu}$ given by~\eqref{eq:etabsmu} and  $\thetab_s=2^{-m_W}\sum_x \etaT_{s,(x,0)}\geq 0$ ($\etaT$ defined in~\eqref{eq:etan}).

\item  $s\in S^{W}$ is related  to the continuous   signal $y_s$ related to the Wiener process $dW_s$ via
\begin{equation}~\label{eq:y-imp}
dy_s(t)=dW_s(t)+\sqrt{\etab_s} \tr{\sum_{\nu} c_{s,\nu}(V_\nu+V_\nu^\dag)\rho_t}\,dt.
\end{equation}
The efficiencies $\etab_s$ belong to  $[0,1]$ and correspond to  the eigenvalues of $E E^\dag$ with   matrix $E$ defined in Lemma~\ref{lem:etaI}. The
 real coefficients  $c_{s,\nu}$ are given by the entries of the orthogonal matrix $C$ appearing in the singular value decomposition of $E=R D C$ with $R$ and $C$ orthogonal and $D$ the rectangular diagonal matrix with diagonal entries  $\sqrt{\etab_s}$:
 $$
 \forall s,s' \in S^W,~\sum_\nu c_{s,\nu} c_{s',\nu}=\delta_{s,s'}
 $$
\end{itemize}
\end{thm}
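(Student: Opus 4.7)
The plan is to adapt the scheme of Proposition~\ref{pr:conv}: establish tightness of $(\rho_n,\rho^e_n)$ in the Skorokhod space, compute the generator $\Ac_n$ of the piecewise-constant Markov process, and show pointwise (in fact uniform, over $\mathcal{D}^2$) convergence $\Ac_n f \to \Ac f$ for every $C^2$ test function $f$. Tightness is essentially unchanged: $(\rho_k,\rho^e_k)$ stays in the compact set $\mathcal{D}^2$, and inspection of $\Mc_s(\rho)/\tr{\Mc_s(\rho)}-\rho$ shows that jump sizes are $\mathcal{O}(1/\sqrt{n})$ on events of probability $\mathcal{O}(1)$ and $\mathcal{O}(1)$ on events of probability $\mathcal{O}(1/n)$, so the moment bound used in~\cite{pellegrini2010markov} carries over verbatim. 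The limiting martingale problem is well-posed since the coefficients of~\eqref{eq:Poisson and Wiener}--\eqref{eq:cfepw} are locally Lipschitz on $\mathcal{D}^2$ (each Poisson denominator is bounded below whenever its associated jump contributes), and combined with the generator convergence this yields convergence in distribution.

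The core computation starts from
\[
\Ac_n f(\rho,\rho^e) = n\sum_s p_s(\rho) \left(f\left(\MK{\Mc_s(\rho)},\MK{\Mc_s(\rho^e)}\right)-f(\rho,\rho^e)\right),
\]
which I split along the partition $\{1,\ldots,m\}=S^P\sqcup S^W$ of Lemma~\ref{lem:etaI}. Substituting $\eta^n=\etaI+\etaT/n+\mathcal{O}(n^{-2})$ together with the asymptotics~\eqref{eq:MMxO}--\eqref{eq:MMPxmu} into $\Mc_s(\rho)=\sum_{x,\mu}\eta^n_{s,(x,\mu)}M_{x,\mu}\rho M_{x,\mu}^\dag$ produces the leading-order behaviour in each regime. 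For $s\in S^P$, the defining condition $\etaI_{s,(x,0)}=0$ for all $x$ annihilates the $\mathcal{O}(1)$ contribution of $M_{x,0}\rho M_{x,0}^\dag$; only the $\etaT_{s,(x,0)}/n$ correction and the intrinsically $\mathcal{O}(1/n)$ Poisson terms $M_{x,\mu}\rho M_{x,\mu}^\dag$ with $\mu\geq 1$ survive. A direct identification gives $np_s(\rho)\to \thetab_s+\sum_\mu \etab_{s,\mu}\tr{V_\mu\rho V_\mu^\dag}$ and $\Mc_s(\rho)/\tr{\Mc_s(\rho)}\to(\thetab_s\rho+\sum_\mu \etab_{s,\mu}V_\mu\rho V_\mu^\dag)/(\thetab_s+\sum_\mu \etab_{s,\mu}\tr{V_\mu\rho V_\mu^\dag})$ (and similarly for $\rho^e$), which is exactly the Poisson summand of the claimed generator.

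For $s\in S^W$ the weight $\sum_x\etaI_{s,(x,0)}$ is strictly positive, and the analysis parallels the treatment of $p_{x,0}$ in Proposition~\ref{pr:conv} with $\etaI_{s,(x,0)}$ in place of $c_W$. The $1/\sqrt{n}$ contributions to $\Mc_s(\rho)-p_s(\rho)\rho$ assemble into $\sum_\nu\bigl(\sum_x x_{\nu-m_P}\etaI_{s,(x,0)}\bigr)\Lambda_\nu(\rho)$, whose coefficients are, up to the normalizer $\sqrt{\sum_x\etaI_{s,(x,0)}}$, precisely the entries $E_{s,\nu}$ of Lemma~\ref{lem:etaI}. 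At second order the diffusion coefficient is governed by $EE^\dag$; diagonalizing through the SVD $E=RDC$ performs an orthogonal change of the discrete Wiener-like variables that decorrelates the effective noise, identifies the diagonal entries of $D^2$ as the efficiencies $\etab_s$ and the rows of $C$ as the coefficients $c_{s,\nu}$, so the effective measurement operator for channel $s$ becomes $\sum_\nu c_{s,\nu}V_\nu$. The $\rho^e$-dynamics requires the extra ratio $p_s(\rho)/p_s(\rho^e)$, expanded exactly as in~\eqref{eq:fterme}, which both fixes the innovation form $dy_s-\sqrt{\etab_s}\tr{\sum_\nu c_{s,\nu}(V_\nu+V_\nu^\dag)\rho^e_t}\,dt$ and supplies the coupling drift between $\rho$ and $\rho^e$.

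The main obstacle is the linear-algebraic reorganization in the $S^W$ block: the expansion a priori produces a doubly-correlated sum indexed by $(s,\nu,\nu')$, and it must be repackaged via the singular value decomposition of $E$ into a decorrelated sum over independent Wiener channels indexed by $s\in S^W$. The left rotation $R$ only relabels the effective $W_s$ (immaterial, since the limiting Wiener processes are defined up to orthogonal mixing), but the right rotation $C$ on the $\nu$-side is essential: it exposes the effective operators $\sum_\nu c_{s,\nu}V_\nu$ and, through the orthogonality relations $\sum_\nu c_{s,\nu}c_{s',\nu}=\delta_{s,s'}$, converts the cross-terms into Kronecker deltas. Once this identification is in place, matching the Wiener part of $\Ac f$ is a direct check, and Theorem~\ref{thm:pw} follows by the same tightness-plus-generator-convergence argument used for Proposition~\ref{pr:conv}.
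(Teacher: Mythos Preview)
Your proposal is correct and follows essentially the same route as the paper: tightness via the moment bound of~\cite{pellegrini2010markov}, splitting $\Ac_n$ along the partition $S^P\sqcup S^W$ of Lemma~\ref{lem:etaI}, treating $s\in S^P$ as pure-jump terms via the key observation $\etaI_{s,(x,0)}=0$, and treating $s\in S^W$ by a second-order Taylor expansion followed by the SVD $E=RDC$ to decorrelate the diffusion block (with $R$ absorbed into an orthogonal relabeling of the Wiener processes and $C$ producing the effective operators $\sum_\nu c_{s,\nu}V_\nu$). The paper carries out exactly these steps with more explicit intermediate expansions, but there is no substantive difference in strategy.
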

The proof of this theorem is given in the next subsection. It admits the same structure as the proof of Proposition~\ref{pr:conv} but with sightly more complicated computations for the Markov generators. The stability of such quantum filters  is ensured in the following theorem.
\begin{thm}~\label{thm:pws}
  \rm Consider the Markov process $(\rho_t,\rho^e_t),$ satisfying~(\ref{eq:Poisson and Wiener},\ref{eq:cfepw}) where the positive integers $m_P$, $m_W$ and $m$ are arbitrary, where
 the $m_p+m_W$  square matrices $V_\xi$  are arbitrary, where the partition $(S^W,S^P)$ of $\{1,\ldots,m\}$ is   arbitrary, where $\forall (s,\mu)\in S^P\times\{1,\ldots,m_P\}$, $\thetab_s \geq 0$,   $\etab_{s,\mu}\in[0,1]$  and $\sum_{s'\in S^P} \etab_{s',\mu}\leq 1$, where
 $$
     \forall (s,s')\in S^W, \quad \etab_s \in[0,1] \quad \text{and}\quad
     \sum_{\nu\in\{m_P+1,\ldots,m_P+m_W\}} c_{s,\nu} c_{s',\nu}=\delta_{s,s'}.
 $$
 Then, the fidelity $F(\rho_t,\rho^e_t)$ defined in Equation~\eqref{eq:fidelity}  is a $(\mathcal F_t)-$submartingale, where $\mathcal F_t=\sigma\{(\rho_\tau,\rho^e_\tau)|\tau\leq t\}$. In particular, we have,  $$\EE{F(\rho_\tau,\rho^e_\tau)|\mathcal F_t}=\EE{F(\rho_\tau,\rho^e_\tau)|(\rho_t,\rho^e_t)}\geq F(\rho_t,\rho^e_t),$$ for all $\tau\geq t.$
\end{thm}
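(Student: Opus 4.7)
The plan is to mirror, essentially verbatim, the two-step strategy used to prove Theorem~\ref{thm:continuous-time stability}, but now invoking the imperfect-measurement analogs (Theorem~\ref{thm:discrete-imp} and Theorem~\ref{thm:pw}) in place of Theorem~\ref{thm:discrete-time stability} and Proposition~\ref{pr:conv}. The hypotheses on $\thetab_s$, $\etab_{s,\mu}$, $\etab_s$ and the coefficients $c_{s,\nu}$ in the statement are precisely those that arise in Lemma~\ref{lem:etaI} from a left stochastic matrix $\eta^n$ of the form~\eqref{eq:etan}; conversely, given data $(\thetab_s,\etab_{s,\mu},\etab_s,c_{s,\nu})$ satisfying the stated constraints, one can construct a left stochastic $\eta^n$ with these asymptotic invariants. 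Hence the continuous-time SMEs~\eqref{eq:Poisson and Wiener}--\eqref{eq:cfepw} in Theorem~\ref{thm:pws} arise as the limiting dynamics of some discrete-time chain~\eqref{eq:discrete-imp}--\eqref{eq:ediscrete-imp}.

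First I would fix such a realization: pick $\eta^n$ satisfying~\eqref{eq:etan} whose limiting parameters match the data in the statement of Theorem~\ref{thm:pws}, and form the piecewise constant processes $(\rho_n(t),\rho^e_n(t))$ via~\eqref{eq:rhonrhoen}. By Theorem~\ref{thm:discrete-imp}, the discrete-time fidelity $(F(\rho_k,\rho^e_k))$ is a submartingale with respect to the natural filtration, so for any $\tau\geq t\geq 0$ one has
\begin{equation*}
\EE{F(\rho_n(\tau),\rho^e_n(\tau))\,\big|\,(\rho_n(t),\rho^e_n(t))}\;\geq\; F(\rho_n(t),\rho^e_n(t)).
\end{equation*}
By Theorem~\ref{thm:pw}, $(\rho_n(t),\rho^e_n(t))$ converges in distribution to the Markov process $(\rho_t,\rho^e_t)$ of~(\ref{eq:Poisson and Wiener},\ref{eq:cfepw}). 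Since the fidelity $F:\mathcal D\times\mathcal D\to[0,1]$ is continuous and bounded, the convergence in distribution together with the Markov property implies
\begin{equation*}
\lim_{n\to+\infty}\EE{F(\rho_n(\tau),\rho^e_n(\tau))\,\big|\,(\rho_n(t),\rho^e_n(t))}=\EE{F(\rho_\tau,\rho^e_\tau)\,\big|\,(\rho_t,\rho^e_t)},
\end{equation*}
so passing to the limit in the previous inequality yields $\EE{F(\rho_\tau,\rho^e_\tau)\,\big|\,(\rho_t,\rho^e_t)}\geq F(\rho_t,\rho^e_t)$. Finally, the Markov property of $(\rho_t,\rho^e_t)$ identifies this conditional expectation with $\EE{F(\rho_\tau,\rho^e_\tau)\,\big|\,\mathcal F_t}$, establishing the submartingale property.

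The only genuinely delicate point is the realizability step at the start: one must verify that any admissible tuple $(m_P,m_W,m,S^P,S^W,\thetab_s,\etab_{s,\mu},\etab_s,c_{s,\nu})$ satisfying the constraints listed in the statement does arise from some left stochastic $\eta^n$ with the asymptotic~\eqref{eq:etan}. This is a straightforward but slightly tedious construction: for $s\in S^P$ one distributes the leading $\mathcal O(1/n)$ mass $\thetab_s/n$ uniformly over $x$ at $\mu=0$ and the mass $\etab_{s,\mu}$ at each $\mu\geq1$ (uniformly in $x$), while for $s\in S^W$ one uses the singular value decomposition $E=RDC$ from Lemma~\ref{lem:etaI} to solve for the entries $\etaI_{s,(x,0)}$ consistent with the prescribed $\etab_s$ and $c_{s,\nu}$; the residual mass on each column is absorbed into an additional $s\in S^W$ (or $S^P$) to make $\eta^n$ stochastic. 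Once this realization is in place, the rest of the argument is the direct transcription of the proof of Theorem~\ref{thm:continuous-time stability}, and the conclusion follows immediately.
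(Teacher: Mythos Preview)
Your proposal is correct and follows essentially the same approach as the paper's own proof: both argue by realizing the given continuous-time data $(\thetab_s,\etab_{s,\mu},\etab_s,c_{s,\nu})$ via a family of left stochastic matrices $\eta^n$ satisfying~\eqref{eq:etan}, then invoke Theorem~\ref{thm:discrete-imp} for the discrete submartingale property and Theorem~\ref{thm:pw} for the convergence in distribution, and finally pass to the limit exactly as in the proof of Theorem~\ref{thm:continuous-time stability}. The paper is in fact terser than you are about the realizability step, simply declaring the required manipulations ``simple and not detailed here,'' whereas you sketch an explicit construction.
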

\begin{proof}
The proof is the similar  to the one given for Theorem~\ref{thm:continuous-time stability}. It relies on a direct application of  Theorem~\ref{thm:discrete-imp} and  Theorem~\ref{thm:pw}. The assumptions made on the real coefficients $\thetab_s$, $\etab_{s,_mu}$, $\etab_s$ and $c_{s,\nu}$ implies  the existence of    a family of stochastic process $(\rho_n(t),\rho_n^e(t))$ defined in~\eqref{eq:rhonrhoen} converging towards $(\rho_t,\rho^e(t))$ for $n$ large. This implication  relies on  manipulations based on Lemma~\ref{lem:etaI} and providing a  family of stochastic matrices  $\eta^n= \etaI + \etaT/n + O(1/n)$ such that  these coefficients  $(\thetab_s,\etab_{s,_mu},\etab_s,c_{s,\nu})$ are related to $\etaI$ and $\etaT$ according to  Theorem~\ref{thm:pw}.  These manipulations are simple and  not detailed here.
\end{proof}
\subsection{Proof of Theorem~\ref{thm:pw}}
The tightness property of $(\rho_n(t),\rho_n^e(t))$ can be concluded by the similar argument appeared in~\cite{pellegrini2010markov}. We give  here  the convergence proof of the Markov generators $\mathcal A_n$ of $(\rho_n(t),\rho_n^e(t))$   towards the Markov generator $\mathcal A$ of $(\rho_t,\rho^e_t)$.  We  detail the convergence  proof for
$S^P=\varnothing$ with  $S^W=\{1,\ldots,m\}$ and for $S^P=\{1,\ldots,m\}$ with $S^W=\varnothing$. The general case where both $S^P$ and $S^W$ are not empty is just a concatenation  of the two previous ones .
\subsubsection{Proof of Theorem~\ref{thm:pw} when  $S^P=\varnothing$ and   $S^W=\{1,\ldots,m\}$} \label{ssec:SW}
Consider the singular value decomposition of matrix $E$ defined in Lemma~\ref{lem:etaI}: $E=R D C$ with $R\in O(m)$, $C\in O(m_W)$ and $D$ the rectangular diagonal matrix with diagonal formed by $\sqrt{\etab_s}$. Notice that the entries of $C$ coincide with the  $c_{s,\nu}$.  Set $$
 \pb_s\triangleq 2^{-m_W} \sum_{x}\etaI_{s,(x,0)}, \quad
 \pbp_{s,\nu}\triangleq  2^{-m_W}\sum_{x | x_{\nu-m_P} =1}\etaI_{s,(x,0)}, \quad
 \pbm_{s,\nu}\triangleq  2^{-m_W}\sum_{x | x_{\nu-m_P} =-1} \etaI_{s,(x,0)}
 .
 $$
 Then $E_{s,\nu}= \sum_\nu \tfrac{\pbp_{s,\nu} -\pbm_{s,\nu}}{\bar p_s}$. Since $R$ is an orthogonal matrix, $dW=(dW_s)_{s\in\{1,\ldots,m\}}$ and  $R^\dag dW$ define similar Wiener processes. Thus, with replacing $dW$ by $R^\dag dW$ in~(\ref{eq:Poisson and Wiener},\ref{eq:cfepw}), we have
\begin{align}~\label{eq:cf}
d\rho_t&=-i[H,\rho_t] \,dt+ \sum_\xi \Lc_\xi(\rho_t)\,dt
+\sum_{s}\left(\sum_\nu \frac{\pbp_{s,\nu} -\pbm_{s,\nu}}{\sqrt{\pb_s}}\Lambda_\nu(\rho_t)\right)\,dW_s(t),
\\~\label{eq:cfe}
d\rho^e_t&=-i[H,\rho^e_t]\,dt + \sum_\xi \Lc_\xi(\rho^e_t)\,dt+\mathcal K(\rho_t,\rho^e_t)\,dt
+\sum_s  \sum_\nu \tfrac{\pbp_{s,\nu} -\pbm_{s,\nu}}{\sqrt{\pb_s}}\Lambda_\nu(\rho^e_t)dW_s(t),
\end{align}
with $$\mathcal K(\rho,\rho^e):=\sum_s\left(\sum_\nu \tfrac{\pbp_{s,\nu} -\pbm_{s,\nu}}{\bar p_s} \Big(\tr{(V_\nu+V_\nu^\dag)(\rho-\rho^e)}\Big)\right)\left(\sum_{\nu'}(\pbp_{s,\nu'} -\pbm_{s,\nu'})\Lambda_{\nu'}(\rho^e)\right).$$
The infinitesimal generator associated to the Markov process $(\rho_n,\rho^e_n)$ satisfying dynamics~\eqref{eq:discrete-imp} and~\eqref{eq:ediscrete-imp}, can be written as follows
$$
    \Ac_n f(\rho,\rho^e)\approx
    n \sum_s  p_{s}(\rho)\left( f\left( \tfrac{\Mc_s(\rho)}{p_{s}(\rho)},\tfrac{\Mc_s(\rho^e)}{p_{s}(\rho^e)}\right) -f(\rho,\rho^e)\right)
.
$$
We can approximate the generator $\Ac_n$ via  the second order expansion of $f$ around $\rho$ and $\rho^e$.
\begin{multline*}
    \Ac_n f(\rho,\rho^e)= n D_{(\rho,\rho^e)} f\cdot \left(\sum_s \Mc_s(\rho) - p_s(\rho) \rho,\sum_s \Big(\Mc_s(\rho^e) - p_s(\rho^e) \rho^e\Big)\frac{p_s(\rho)}{p_s(\rho^e)} \right)
    \\+ n \sum_{s} p_s(\rho)\tfrac{1}{2} D^2_{(\rho,\rho^e)} f\cdot \left(  \tfrac{\Mc_s(\rho)}{p_s(\rho)}-  \rho, \tfrac{\Mc_{s}(\rho^e)}{p_{s}(\rho^e)} -\rho^e;\tfrac{\Mc_s(\rho)}{p_s(\rho)}-  \rho, \tfrac{\Mc_{s}(\rho^e)}{p_{s}(\rho^e)}\right)
   + \mathcal O(n^{-1/2})
  .
\end{multline*}
Let us first calculate the first term in above. By using~\eqref{eq:MMxO}, ~\eqref{eq:PxO} and~\eqref{eq:MMPxmu}, we find
\begin{align}~\label{eq:msdyn}
\Mc_s(\rho)  -p_s(\rho) \rho&=
\sum_{x}\eta^n_{s,(x,0)} \left(M_{x,0}\rho M_{x,0}^\dag-p_{x,0}(\rho) \rho\right)
    + \sum_{x,\mu}\eta^n_{s,(x,\mu)} \left(M_{x,\mu}\rho M_{x,\mu}^\dag-p_{x,\mu}(\rho)  \rho\right)
  \nonumber\\ & =
    \sum_x \tfrac{ \eta^n_{s,(x,0)}c_W}{\sqrt{n}} \sum_\nu x_{\nu-m_P} \Lambda_\nu(\rho)
   \nonumber\\&+ \sum_x
    \tfrac{ \eta^n_{s,(x,0)}c_W}{n} \left(-i[H,\rho] + \sum_\xi \Lc_\xi(\rho)-\sum_\mu\Upsilon_\mu(\rho)\tr{V_\mu\rho V_\mu^\dag}\right)
   \nonumber \\
    &+  \sum_{x,\mu}\tfrac{\eta^n_{s,(x,\mu)}c_W}{n} \Upsilon_\mu(\rho)\tr{V_\mu\rho V_\mu^\dag}
    .
\end{align}
We have
\begin{equation}~\label{eq:detm}
    \sum_s \Mc_s(\rho) - p_s(\rho) \rho =
    \tfrac{1}{n} \left(-i[H,\rho] + \sum_\xi \Lc_\xi(\rho)\right)
    ,
\end{equation}
since for all $(x,0)$ and $(x,\mu)$: $\sum_s \eta^n_{s,(x,0)}=\sum_s \eta^n_{s,(x,\mu)}=1.$ Also note that for all $\nu:$ $$\sum_x x_{\nu-m_P} \Lambda_\nu(\rho)=0.$$
Equation~\eqref{eq:msdyn} can be rewritten as follows,
\begin{align}~\label{eq:mc}
\Mc_s(\rho) - p_s(\rho) \rho &=
    \sum_x \tfrac{ \eta^n_{s,(x,0)}c_W}{\sqrt{n}} \sum_\nu x_{\nu-m_P} \Lambda_\nu(\rho) + \mathcal O(n^{-1/2})
    \nonumber\\
    &= \sum_\nu \tfrac{\pbp_{s,\nu} -\pbm_{s,\nu} }{\sqrt{n}} \Lambda_\nu(\rho) + \mathcal O(n^{-1/2})
    .
\end{align}
Thus $\Mc_s(\rho) - p_s(\rho) \rho= \mathcal O(n^{-1/2}).$
Now using~\eqref{eq:PxO} and~\eqref{eq:MMPxmu}, we have
\begin{multline*}
    p_s(\rho)= \sum_{x}\eta^n_{s,(x,0)} p_{x,0}(\rho) + \sum_{x,\mu}\eta^n_{s,(x,\mu)} p_{x,\mu}(\rho)
    =
    \\
    \sum_{x}\eta^n_{s,(x,0)}  c_W \left(1 + \tfrac{1}{\sqrt{n}} \sum_\nu x_{\nu-m_P} \tr{  (V_\nu+V_\nu^\dag)\rho}-
    \tfrac{1}{n} \sum_\mu \tr{\rho V_\mu^\dag V_\mu} \right)
    \\
    +
    \sum_{x,\mu}\tfrac{\eta^n_{s,(x,\mu)}c_W}{n} \tr{V_\mu \rho V_\mu^\dag}
    .
\end{multline*}
By equation in above, the approximate value of $\tfrac{p_s(\rho)}{p_s(\rho^e)}$ is given by
\begin{align*}
\tfrac{p_s(\rho)}{p_s(\rho^e)}&= 1+\tfrac{c_W}{\sqrt{n}\bar p_s}\sum_x\eta^n_{s,(x,0)}\left(\sum_\nu x_{\nu-m_P} \Big(\tr{(V_\nu+V_\nu^\dag)\rho}-\tr{(V_\nu+V_\nu^\dag)\rho^e}\Big)\right)+\mathcal O(n^{-1/2})\\&= 1+\tfrac{1}{\sqrt{n}\bar p_s}\left(\sum_\nu\Big(\pbp_{s,\nu} -\pbm_{s,\nu}\Big) \Big(\tr{(V_\nu+V_\nu^\dag)\rho}-\tr{(V_\nu+V_\nu^\dag)\rho^e}\Big)\right)+\mathcal O(n^{-1/2}).
\end{align*}
Finally, we find the following
\begin{align*}
\sum_s n\Big(\Mc_s(\rho^e) - p_s(\rho^e) \rho^e\Big)\frac{p_s(\rho)}{p_s(\rho^e)}
=-i[H,\rho^e] + \sum_\xi \Lc_\xi(\rho^e)+\mathcal K(\rho,\rho^e)+\mathcal O(n^{-1/2}).
\end{align*}
Since $p_s(\rho)=\pb_s + \mathcal O(n^{-1/2})$ and
$\frac{\Mc_s(\rho)}{p_s(\rho)} -\rho = \mathcal O(n^{-1/2})$, the  zero order terms of $$n \sum_{s} p_s(\rho)\tfrac{1}{2} D^2_{(\rho,\rho^e)} f\cdot\left(  \tfrac{\Mc_s(\rho)}{p_s(\rho)}-  \rho, \tfrac{\Mc_{s} (\rho^e)}{p_{s}(\rho^e)} -\rho^e; \tfrac{\Mc_s(\rho)}{p_s(\rho)}-  \rho, \tfrac{\Mc_{s} (\rho^e)}{p_{s}(\rho^e)} -   \rho^e\right)$$ is given by the following computations:
{\small\begin{align*}
&n \sum_{s} p_s(\rho)D^2_{(\rho,\rho^e)} f\cdot
    \left(  \tfrac{\Mc_s(\rho)}{p_s(\rho)}-  \rho, \tfrac{\Mc_{s} (\rho^e)}{p_{s}(\rho^e)}-\rho^e;\tfrac{\Mc(\rho)}{p_s(\rho)}-\rho, \tfrac{\Mc_{s} (\rho^e)}{p_{s}(\rho^e)} -\rho^e\right)\\
     &= n \sum_{s} \tfrac{1}{\pb_s} D^2_{(\rho,\rho^e)} f\cdot\left(\Mc_s(\rho) - p_s(\rho) \rho,  \Mc_{s} (\rho^e) - p_{s}(\rho^e)  \rho^e;\Mc_s(\rho) - p_s(\rho) \rho,  \Mc_{s} (\rho^e) - p_{s}(\rho^e)  \rho^e\right) + \mathcal O(n^{-1/2})
    \\
    &= \sum_{s} \tfrac{1 }{\pb_s}
    D^2_{(\rho,\rho^e)} f\cdot\left( \sum_\nu (\pbp_{s,\nu} -\pbm_{s,\nu})\Lambda_\nu(\rho), \sum_{\nu} (\pbp_{s,\nu} -\pbm_{s,\nu})  \Lambda_{\nu}(\rho^e); \sum_{\nu'} (\pbp_{s,\nu'} -\pbm_{s,\nu'})\Lambda_{\nu'}(\rho), \sum_{\nu'} (\pbp_{s,\nu'} -\pbm_{s,\nu'})  \Lambda_{\nu'}(\rho^e)\right) \\
    &+ \mathcal O(n^{-1/2})
    ,
\end{align*}}
where we have used Equation~\eqref{eq:mc} and the fact that we have to keep only the zero order terms  of $\frac{1}{p_s(\rho)}$ and $\frac{p_s(\rho)}{p_s(\rho^e)^2}$ which is equal to $\frac{1}{\pb_s}$.

\medskip

We get finally the following expression for the generator $\Ac$ which is given as the limit of $\Ac_n$ when $n$ goes to infinity.
\begin{align*}
    &\Ac f(\rho,\rho^e)=
     D_\rho f\cdot \left(-i[H,\rho] + \sum_\xi \Lc_\xi(\rho),
     -i[H,\rho^e] + \sum_\xi \Lc_\xi(\rho^e)+\mathcal K(\rho,\rho^e)\right)
    \nonumber\\
    &+ \tfrac{1}{2}\sum_{s}
    D^2_{(\rho,\rho^e)} f\cdot\left( \sum_\nu \tfrac{\pbp_{s,\nu} -\pbm_{s,\nu}}{\sqrt{\pb_s}}\Lambda_\nu(\rho), \sum_{\nu} \tfrac{\pbp_{s,\nu} -\pbm_{s,\nu}}{\sqrt{\pb_s}}  \Lambda_{\nu}(\rho^e); \sum_{\nu'} \tfrac{\pbp_{s,\nu'} -\pbm_{s,\nu'}}{\sqrt{\pb_s}}\Lambda_{\nu'}(\rho), \sum_{\nu'} \tfrac{\pbp_{s,\nu'} -\pbm_{s,\nu'}}{\sqrt{\pb_s}}  \Lambda_{\nu'}(\rho^e)  \right).
\end{align*}
The expression given in above corresponds well to the infinitesimal generator associated to the Markov process $(\rho,\rho^e)$ satisfying dynamics~\eqref{eq:cf} and~\eqref{eq:cfe}.

Comparing this equation to the one given in~\eqref{eq:rho} and~\eqref{eq:rhoe} for perfect measurements, we observe that the Poisson processes have  completely disappeared. It just remains the ensemble average $\sum_\mu \Lc_\mu (\rho)$ in the deterministic part. Also, the original $m_W$ Wiener processes indexed by $\nu$ have been reorganized in $m$ Wiener processes indexed by $s$.
\subsubsection{Proof of Theorem~\ref{thm:pw} when  $S^P=\{1,\ldots,m\}$ and   $S^W=\varnothing$} \label{ssec:SP}
In this case, we have
\begin{multline}~\label{eq:Poisson}
d\rho_t=-i[H,\rho_t] \,dt+ \sum_\xi \Lc_\xi(\rho_t)\,dt
\\
+\sum_s\left(\frac{\thetab_s\rho_t+\sum_{\mu}\etab_{s,\mu}V_\mu\rho_t V_\mu^\dag}{\thetab_s+\sum_{\mu}\etab_{s,\mu}\tr{V_\mu\rho_t V_\mu^\dag}} -\rho_t\right)\left(dN_s(t)- \Big( \thetab_s+\sum_{\mu}\etab_{s,\mu}\tr{V_\mu\rho_t V_\mu^\dag}\Big)\,dt\right),
\end{multline}
and
\begin{multline}~\label{eq:Poissone}
d\rho^e_t=-i[H,\rho^e_t] \,dt+ \sum_\xi \Lc_\xi(\rho^e_t)\,dt
\\
+\sum_s\left(\frac{\thetab_s\rho^e_t+\sum_{\mu}\etab_{s,\mu}V_\mu\rho^e_t V_\mu^\dag}{\thetab_s+\sum_{\mu}\etab_{s,\mu}
  \tr{V_\mu\rho^e_t V_\mu^\dag}} -\rho^e_t\right)\left(dN_s(t)- \Big( \thetab_s+\sum_{\mu}\etab_{s,\mu}\tr{V_\mu\rho^e_t V_\mu^\dag}\Big)\,dt\right).
\end{multline}
The infinitesimal generator associated to the Markov process $(\rho_n,\rho^e_n)$  is given by the following
\begin{equation*}
\Ac_n f(\rho,\rho^e)\approx
    n \sum_s p_{s}(\rho)\left( f\left( \frac{\Mc_s(\rho)}{p_{s}(\rho)}, \frac{\Mc_s(\rho^e)}{p_{s}(\rho^e)} \right) -f(\rho,\rho^e)\right).
\end{equation*}
The equation in above can be rewritten as follows
\begin{multline}~\label{eq:genP}
\Ac_n f(\rho,\rho^e)\approx
    n \sum_s p_{s}(\rho)\left( f\left( \frac{\Mc_s(\rho)}{p_{s}(\rho)}, \frac{\Mc_s(\rho^e)}{p_{s}(\rho^e)} \right) -f(\rho,\rho^e)\right)\\
    +nD_{(\rho,\rho^e)}f\cdot\left(\sum_s \left(\Mc_s(\rho)-p_s(\rho)\rho\right),\sum_s \left(\Mc_s(\rho^e)-p_s(\rho^e)\rho^e\right)+\mathcal K_2(\rho,\rho^e)\right)\\
    -n\sum_s p_s(\rho)D_{(\rho,\rho^e)}f\cdot\left(\frac{\Mc_s(\rho)}{p_s(\rho)}-\rho,\frac{\Mc_s(\rho^e)}{p_s(\rho^e)}-\rho^e\right),
\end{multline}
with $\mathcal K_2(\rho,\rho^e):=\sum_s\left(\frac{\Mc_s(\rho^e)}{p_s(\rho^e)}-\rho^e\right)\left(p_s(\rho)-p_s(\rho^e)\right).$

\medskip

Now let us give the expression of $\Mc_s(\rho)$
\begin{align}~\label{eq:mcs}
\Mc_s(\rho)&=\sum_x \eta^n_{s,(x,0)} M_{x,0}\rho M_{x,0}^\dag+\sum_{x,\mu} \eta^n_{s,(x,\mu)}M_{x,\mu}\rho M_{x,\mu}^\dag\nonumber\\=
&\frac{1}{n}\left(\thetab_s\rho+\sum_{\mu}\etab_{s,\mu}V_\mu\rho V_\mu^\dag\right)+\mathcal O(n^{-\frac{3}{2}}),
\end{align}
where for the last term in above, we have used~\eqref{eq:MMxO},~\eqref{eq:MMPxmu}, ~\eqref{eq:etan}, $\thetab_s=c_W\sum_x \etaT_{s,(x,0)}$ and
$\etab_{s,\mu}= c_W \sum_x \etaI_{s,(x,\mu)}$.

Consequently, the probability $p_s(\rho)$ has the following form
\begin{equation}~\label{eq:ProP}
p_s(\rho)=\frac{1}{n}\left(\thetab_s+\sum_{\mu}\etab_{s,\mu}\tr{V_\mu\rho V_\mu^\dag}\right)+\mathcal O(n^{-\frac{3}{2}}).
\end{equation}
Now by using Equations~\eqref{eq:detm}, ~\eqref{eq:mcs} and~\eqref{eq:ProP}, we find the following expression for the limit of the generator $\Ac_n$ expressed in~\eqref{eq:genP}
{\small\begin{align*}
&\Ac f(\rho,\rho^e)=D_{(\rho,\rho^e)} f \cdot \left(-i[H,\rho] + \sum_\xi \Lc_\xi(\rho),-i[H,\rho^e] + \sum_\xi \Lc_\xi(\rho^e)+\mathcal K_2(\rho,\rho^e)\right)\\
&+\sum_s\left(   \thetab_s+\sum_{\mu}\etab_{s,\mu}\tr{V_\mu\rho V_\mu^\dag}\right)\times\\
&\left( f\left( \frac{\thetab_s\rho+\sum_{\mu}\etab_{s,\mu}V_\mu\rho V_\mu^\dag}{\thetab_s+\sum_{\mu}\etab_{s,\mu}\tr{V_\mu\rho V_\mu^\dag}}, \frac{\thetab_s\rho^e+\sum_{\mu}\etab_{s,\mu}V_\mu\rho^e V_\mu^\dag}{\thetab_s+\sum_{\mu}\etab_{s,\mu}\tr{V_\mu\rho^e V_\mu^\dag}}  \right) -f(\rho,\rho^e)\right)\\
&-\sum_s\left( \thetab_s+\sum_{\mu}\etab_{s,\mu}\tr{V_\mu\rho V_\mu^\dag}\right)\times\\
&D_{(\rho,\rho^e)} f\cdot \left(\frac{\thetab_s\rho+\sum_{\mu}\etab_{s,\mu}V_\mu\rho V_\mu^\dag}{\thetab_s+\sum_{\mu}\etab_{s,\mu}\tr{V_\mu\rho V_\mu^\dag}}-\rho,\frac{\thetab_s\rho^e+\sum_{\mu}\etab_{s,\mu}V_\mu\rho^e V_\mu^\dag}{\thetab_s+\sum_{\mu}\etab_{s,\mu}\tr{V_\mu\rho^e V_\mu^\dag}}-\rho^e\right).
\end{align*}}
In fact, we have $$\Ac_n f(\rho,\rho^e)=\Ac f(\rho,\rho^e)+\mathcal O(n^{-\half}).$$
We observe that the expression of $\Ac f(\rho,\rho^e)$ corresponds well to the generator associated to the Markov process $(\rho,\rho^e)$  satisfying SMEs~\eqref{eq:Poisson} and~\eqref{eq:Poissone}. Now use the tightness property of the sequence $(\rho_n,\rho^e_n)$ to conclude the convergence of the processes $(\rho_n,\rho^e_n)$ towards $(\rho,\rho^e)$.

\subsubsection{Proof of Theorem~\ref{thm:pw} when  $S^P\neq \varnothing$ and   $S^W\neq \varnothing$}

In this general case, the  Markov generator  of $(\rho_n,\rho^e_n)$  is decomposed into two different sums:
\begin{multline*}
     \Ac_n f(\rho,\rho^e)\approx
    n \sum_{s\in\{1,\ldots,m\}}  p_{s}(\rho)\left( f\left( \tfrac{\Mc_s(\rho)}{p_{s}(\rho)},\tfrac{\Mc_s(\rho^e)}{p_{s}(\rho^e)}\right) -f(\rho,\rho^e)\right)
\\
=    n \sum_{s\in S^W}   p_{s}(\rho)\left( f\left( \tfrac{\Mc_s(\rho)}{p_{s}(\rho)},\tfrac{\Mc_s(\rho^e)}{p_{s}(\rho^e)}\right) -f(\rho,\rho^e)\right)
+  n \sum_{s\in S^P}   p_{s}(\rho)\left( f\left( \tfrac{\Mc_s(\rho)}{p_{s}(\rho)},\tfrac{\Mc_s(\rho^e)}{p_{s}(\rho^e)}\right) -f(\rho,\rho^e)\right)
.
\end{multline*}
The limit for $n$ large of each sum versus $s\in S^W$ and $s\in S^P$ is then  obtained following exactly the same calculations as the ones given in subsections~\ref{ssec:SW} and~\ref{ssec:SP}.  Since~(\ref{eq:Poisson and Wiener},\ref{eq:cfepw}) are just the concatenation of
(\ref{eq:cf},\ref{eq:cfe}) for the Wiener part and (\ref{eq:Poisson},\ref{eq:Poissone})  for the Poisson part, we get directly   the convergence of the processes $(\rho_n(t),\rho^e_n(t))$ towards $(\rho_t,\rho^e_t)$.

\section{Conclusion}~\label{sec:conclusion}
For a large class of SMEs driven by Wiener and Poisson processes,  Theorem~\ref{thm:pws} shows that  $\trr{\sqrt{\sqrt{\rho}\rho^e\sqrt{\rho}}}$   between the quantum state $\rho_t$ and its estimate $\rho^e_t$ is  a  submartingale.  Thus the "metric"  $1-\trr{\sqrt{\sqrt{\rho}\rho^e\sqrt{\rho}}}$ is a non-negative super-martingale that vanishes only when $\rho=\rho^e$. A natural question is the following: do there exist other "metrics" $D(\rho,\rho^e)$ that are super-martingales for such a large class of quantum systems?

 Any such  "metric" $D$ must be contractive  for all  Lindblad equations:  in Theorem~\ref{thm:pws},  there is no restriction on the degree of incompleteness of the measurements. Thus we can assume $\etab_s=0$, $S^{W}=\{1,\ldots,m\}$ and $S^{P}=\emptyset$. In this case,  $\rho$ and $\rho^e$ obey the same ordinary Lindblad differential equation
$$
\frac{d}{dt} \rho = -i[H,\rho] + \sum_\xi V_\xi\rho V_\xi^\dag - \half(V_\xi^\dag V_\xi\rho +\rho V_\xi^\dag V_\xi ),
$$
where $H$ and $V_\xi$ are arbitrary.
In~\cite{petz:LAA1996}, Petz has given, via the theory of operator monotone functions,  a complete characterization of distance  that are contractive for all Lindblad evolutions.  Could we exploit  Petz results  to characterize  "metrics" $D(\rho,\rho^e)$  that are  super-martingale for  all the  quantum filtering processes   considered in this paper?


\end{document}